\def\expandafter\UrlBreaks\expandafter{\UrlBreaks\do-\do_}
\DeclareMathOperator*{\argmin}{\arg\!\min}
\newcommand{\projecttitle}{\textsc{PrivApprox}\xspace}
\newcommand{\privapprox}{\textsc{PrivApprox}\xspace}
\newcommand{\myfontsize}{\fontsize{8}{9}\selectfont}
\newcommand{\commentfontsize}{\fontsize{7}{8}\selectfont}
\newcommand{\referencefontsize}{\fontsize{9}{10}\selectfont}
\newcommand*\from{\colon}
\newcommand{\myparagraph}[1]{\smallskip \noindent{\bf {#1}.}}
\newcommand{\out}[1] {}
\newcounter{codeLineCntr}
\newif\ifnotes
\newcommand{\punt}[1]{}
\renewcommand{\eqref}[1]{Equation~(\ref{eq:#1})}
\newcommand{\abs}[1]{\left| #1\right|}
\newcommand{\proc}[1]{\ifmmode\mbox{\textsc{#1}}\else\textsc{#1}\fi}
  \newcommand{\func}[1]{\ifmmode\mathrm{#1}\else\textrm{#1}fi} %
\newcounter{remark}[section]
\newtheorem{theorem}{Theorem}[section]
\newtheorem{lemma}[theorem]{Lemma}
\newtheorem{proposition}[theorem]{Proposition}
\newtheorem{corollary}[theorem]{Corollary}
\begin{document}
\title{Privacy Preserving Stream Analytics\\{\large The Marriage of Randomized Response and Approximate Computing}}
\author{%
	{\em Do Le Quoc$^\dag$, Martin Beck$^\dag$, Pramod Bhatotia$^*$}\\
	{\em Ruichuan Chen$^\ddag$, Christof Fetzer$^\dag$, and Thorsten Strufe$^\dag$}\\
	{\small $^\dag$TU Dresden \quad $^*$The University of Edinburgh \quad   $^\ddag$Nokia Bell Labs}\\
	{\small Technical Report, Jan 2017}
}

\date{}
\maketitle

\maketitle

\begin{abstract}
{\em How to preserve users' \underline{privacy} while supporting \underline{high-utility} analytics for  \underline{low-latency} stream processing?}

To answer this question: we describe the design, implementation and evaluation of \projecttitle, a data analytics system for privacy-preserving stream processing. \privapprox provides three properties:  {\em (i)}   \underline{Privacy}:  zero-knowledge privacy guarantee for users, a privacy bound tighter than the state-of-the-art differential privacy; {\em (ii)}   \underline{Utility}: an interface for data analysts to
systematically explore the trade-offs  between the output accuracy (with error-estimation) and the query execution budget; {\em (iii)}   \underline{Latency}: near real-time stream processing based on a scalable ``synchronization-free''  distributed architecture.

The key idea behind our approach is to marry two techniques together, namely, {\em sampling} (used in the context of approximate computing) and {\em randomized response} (used in the context of privacy-preserving analytics). The resulting marriage is complementary --- it achieves stronger privacy guarantees and also improves the performance for low-latency stream analytics.
\end{abstract}
 \section{Introduction}
\label{sec:introduction}

Many online services continuously collect users' private data for real-time analytics. Much of this data arrives as a data stream and in huge volumes, requiring real-time stream processing based on distributed systems~\cite{flink, s4, storm, spark-streaming}.

In the current ecosystem of data analytics, the analysts usually have direct access to the users' private data, and must be trusted not to abuse it. However, this trust has been violated in the past~\cite{violation1,violation2,violation3,violation4}.

A pragmatic eco-system has two desirable, but contradictory design requirements:  {\em (i)} stronger privacy guarantees for the users; and {\em (ii)} high-utility stream analytics in real-time.  Users seek stronger privacy, while analysts strive for high-utility analytics in real time.

To meet these two design requirements,  there is a surge of novel computing paradigms that address these concerns, albeit {\em separately}. Two such paradigms are {\em privacy-preserving analytics} to protect user privacy and {\em approximate computation} for real-time analytics.

\myparagraph{Privacy-preserving analytics} Recent privacy-preserving analytics systems favor a distributed architecture to avoid central trust (see $\S$\ref{sec:related} for details), where users' private data is stored locally on their respective client devices. Data analysts use a publish-subscribe mechanism to run aggregate queries over the distributed private dataset of a large number of clients. Thereafter, such systems add noise to the aggregate output to provide useful privacy guarantees, such as differential privacy~\cite{differential-privacy}. Unfortunately, these state-of-the-art systems normally deal with single-shot batch queries, and therefore, these systems cannot be used for real-time stream analytics.

\myparagraph{Approximate computation}  Approximate computation is based on the observation that many data analytics jobs are amenable to an approximate, rather than the exact output (see $\S$\ref{sec:related} for details). For such an approximate workflow, it is possible to trade accuracy by  computing over a partial subset (usually selected via a sampling mechanism) instead of the entire input dataset. Thereby, data analytics systems based on approximate computation can achieve low latency and efficient utilization of resources.  However, the existing systems for approximate computation assume a centralized dataset, where the desired sampling mechanism can be employed. Thus, existing systems are not compatible with the distributed privacy-preserving analytics systems.

\myparagraph{The marriage} In this paper, we make the observation that the two computing paradigms, privacy-preserving analytics and approximate computation, are complementary. Both  paradigms strive for an approximate instead of the exact output, but they differ in their {\em means} and {\em goals} for approximation. Privacy-preserving analytics adds explicit {\em noise} to the aggregate query result to protect users' privacy. Whereas, approximate computation relies on a representative {\em sampling}  of the entire dataset to compute over only a subset of data items to enable low-latency/efficient analytics. Therefore, we marry these two existing paradigms together in order to leverage the benefits of both. The high-level idea is to achieve privacy (via approximation) by directly computing over a subset of sampled data items (instead of computing over the entire dataset) and then adding an explicit noise for privacy-preservation.

To realize this marriage, we designed an approximation mechanism that also achieves privacy-preserving goals for stream analytics. Our design (see Figure~\ref{fig:system-overview}) targets a distributed setting, similar as aforementioned, where users' private data is stored locally on their respective personal devices, and an analyst issues a streaming query for analytics  over the distributed private dataset of users. The analyst's streaming query is executed on the users' data periodically (a configurable epoch) and the query results are transmitted to a centralized aggregator via a set of proxies. The analyst interfaces with the aggregator to get the aggregate query output periodically.

We employ two core techniques to achieve our goal. Firstly, we employ {\em sampling}~\cite{srs-sampling} directly at user's site for approximate computation, where each user randomly decides whether to participate in answering the query in the current epoch. Since we employ sampling at the data source, instead of sampling at a centralized infrastructure, we are able to squeeze out the desired data size (by controlling the sampling parameter) from the very ``first stage" in the analytics pipeline, which is  essential in low-latency environments.

Secondly, if the user participates in the query answering process, we employ a {\em randomized response}~\cite{fox1986randomized} mechanism to add noise to the query output at user's site, again locally at the source of the data in a decentralized fashion. In particular, each user locally randomizes its truthful answer to the query to achieve local differential privacy guarantees ($\S$\ref{subsec:answer:randomization}). 
Since we employ noise addition at the source of data, instead of adding the explicit
noise to the aggregate output at a trusted aggregator or proxies, we enable a truly ``synchronization-free" distributed architecture, which requires {\em no coordination} among proxies and the aggregator for  the mandated noise addition.

The last, but not the least, silver bullet of our design: it turns out that the combination of the two aforementioned techniques (i.e., sampling and randomized response) led us to achieve zero-knowledge privacy~\cite{zero-knowledge-privacy}, a privacy bound tighter than the state-of-the-art differential privacy~\cite{differential-privacy}.  (We prove our claim in Appendix~\ref{sec:privacy-evaluation}.)

To summarize, we present the design and implementation of a practical system for privacy-preserving stream analytics in real time. In particular,  our system is a novel combination of the sampling and randomized response techniques, as well as a scalable ``synchronization-free" routing scheme employing a light-weight XOR encryption scheme~\cite{Chen-2013}.  The resulting system ensures zero-knowledge privacy, anonymization, and unlinkability for users ($\S$\ref{subsec:privacy-properties}). 
Altogether, we make the following contributions:

\begin{itemize}

\item We present a marriage of sampling and randomized response to achieve improved performance and stronger privacy guarantees.

\item We present an adaptive query execution interface for analysts to systematically make a trade-off between the output accuracy, and the query execution budget.

\item We present a confidence metric on the output accuracy using a confidence interval to interpret the approximation due to sampling and randomization.

\end{itemize}

To empirically evaluate our approach, we implemented our design as a fully-functional prototype in a system called \projecttitle\footnote{\label{note1} The source code of \projecttitle along with the experimental evaluation setup is publicly available :\href{https://privapprox.github.io}{https://PrivApprox.github.io}.}  based on Apache Flink~\cite{flink} and Apache Kafka~\cite{kafka}.  In addition to stream analytics, we further extended our system to support privacy-preserving ``historical" batch analytics over users' private datasets. The evaluation  based on micro-benchmarks and real-world case-studies shows that this marriage is, in fact, made in heaven!

\section{Overview}
\label{sec:overview}
\begin{figure}[t]
\centering
\includegraphics[scale=0.33]{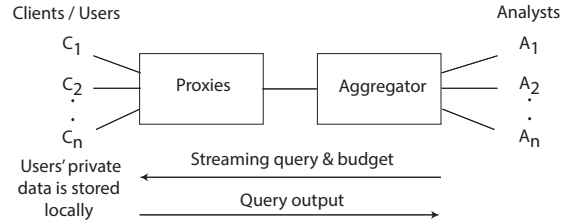}
\caption{System overview}
\label{fig:system-overview}
\end{figure}

In this section, we present an overview of our system called \projecttitle. 

\subsection{System Architecture}
\label{subsec:overview}

\projecttitle is designed for privacy-preserving stream analytics on distributed users' private dataset. Figure \ref{fig:system-overview} depicts the high-level architecture of \projecttitle. Our system consists of four main components: clients, proxies, aggregator, and analysts.

{\em Clients} locally store users' private data on their respective personal devices, and subscribe to queries from the system.  {\em Analysts} publish streaming queries to the system, and also specify a query execution budget.  The query execution budget can either be in the form of latency guarantees/SLAs, output quality/accuracy, or the available computing resources for query processing.  Our system ensures that the computation remains within the specified budget.

At a high-level, the system works as follows: a query published by an analyst is distributed to clients via the aggregator and proxies. Clients answer the analyst's query locally over the users'  private data using a privacy-preserving mechanism. Client answers are transmitted to the aggregator via anonymizing {\em proxies}. The {\em aggregator} aggregates received answers from the clients to provide privacy-preserving stream analytics to the analyst.

\subsection{System Model}
\label{subsec:system-model}
Before we explain the design of \projecttitle, we present the system model assumed in this work.

\subsubsection{Query Model}
\label{subsubsec:query-model}
\projecttitle supports the SQL query language for {\em analysts} to formulate streaming queries. While queries can be complex, the results of a query are expressed as counts within histogram buckets, i.e., each bucket represents a small range of query's answer values. Specifically, each query answer is represented in the form of binary buckets, where each bucket stores a possible answer value `1' or `0' depending on whether or not the answer falls into the value range represented by that bucket.
For example, an analyst can learn the driving speed distribution across all vehicles in San Francisco by formulating an SQL query ``\texttt{SELECT speed FROM vehicle WHERE location=`San Francisco'}''.  The analyst can then define 12 answer buckets on speed: `0', `1$\sim$10', `11$\sim$20', $\cdots$, `81$\sim$90', `91$\sim$100', and `$>100$'.  If a vehicle is moving at 15 mph in San Francisco, it answers `1' for the third bucket and `0' for all others.

Our query model supports not only numeric queries as described above, but also non-numeric queries.  For non-numeric queries, each bucket is specified by a matching rule or a regular expression.  
Note that, at first glance, our query model may appear simple, it however supports  a range of queries such as histogram queries and frequency queries. In addition, it has been shown to be effective for a wide-range of analytics algorithms~\cite{BlumDMN05,statistical-analysis}.

\subsubsection{Computation Model}  \projecttitle adopts a {\em batched stream} programming model~\cite{flink, spark-streaming} in which the online data stream is split into small batches; and each small batch is processed by launching a distributed data-parallel job. The batched streaming model is adopted widely compared to trigger-based systems~\cite{s4, storm} for the following advantages: exact-once semantics, efficient fault-tolerance, and a common data-parallel programming model for both stream and batch analytics.

In particular, \projecttitle employs {\em sliding window computations} over batched stream processing~\cite{slider, contraction-tree}. For sliding windows,  the computation window slides over the input data stream, where the new incoming data items are added, and the old data items are dropped from the window as they become less relevant. Note that these systems~\cite{flink, spark-streaming} expose a time-based window length, and based on the arrival rate,  the number of data items within a window may vary accordingly.

\subsubsection{Threat Model}
\label{subsubsec:threat-model}

{\em Analysts} are potentially malicious.  They may try to violate the \projecttitle's privacy model, i.e., de-anonymize clients, build profiles through the linkage of requests and answers, or de-rand\-omize (remove added noise from) the answers.

{\em Clients} are potentially malicious. They could generate false or invalid responses to distort the query result for the analyst. However, we do not defend against the Sybil attack~\cite{sybil}, which is beyond the scope of this work~\cite{sybil-mobisys}. 

{\em Proxies} are also potentially malicious. They may transmit messages between clients and the aggregator in contravention of the system protocols. 
 \projecttitle includes at least two proxies, and there are at least two proxies which do not collude with each other. 

The {\em aggregator} is assumed to be Honest-but-Curious (HbC): the aggregator faithfully conforms to the system protocol, but may try to exploit the information about clients. The aggregator does not collude with any proxy, nor the analyst. 

Finally, we assume that all end-to-end communications use authenticated and confidential connections (are protected by long-lived TLS connections), and no system component could monitor all network traffic.

\subsubsection{Privacy Properties}
\label{subsec:privacy-properties}
Our privacy properties include:  {\em (i)}~zero-knowledge privacy,  {\em (ii)}~anonymity, and {\em (iii)}~unlinkability.

All aggregate query results in the system are independently produced under {\em zero-knowledge privacy} guarantees. The chosen privacy metric {\em zero-knowledge privacy}~\cite{zero-knowledge-privacy} builds upon differential privacy~\cite{differential-privacy} and provides a tighter bound on privacy guarantees compared to differential privacy. Informally, zero-knowledge privacy states that essentially everything that an adversary can learn from the output of an zero-knowledge private mechanism could also be learned using aggregate information.
{\em Anonymity} means that no system components can associate query answers or query requests with a specific client.  Finally,  {\em unlinkability} means that no system component can join any pair of query requests or answers to the same client, even to the same anonymous client.

For the formal definitions, analysis, and proofs---refer Appendix~\ref{sec:privacy-evaluation}.

\subsubsection{Assumptions} 
\label{sec:design-assumptions}
We make the following assumptions.
 
\begin{itemize}

\item[1.] We assume that the input stream is stratified based on the source of event, i.e., the data items within each stratum follow the same distribution, and are mutually independent. Here a \textit{stratum} refers to one sub-stream. If multiple sub-streams have the same distribution, they are combined to form a stratum. 

\item[2.] We assume the existence of a virtual function that takes the query budget as the input and outputs the sample size for each window based on the budget.

\item[3.] We assume that the  aggregator faithfully follows the system protocol. We could use  trusted computing such as remote attestation~\cite{excalibur} based on Trusted Platform Modules (TPMs) to relax the HbC assumption.

\end{itemize}

 We discuss different possible means to meet the first two assumptions in Appendix~\ref{sec:discussion}.
\section{Design}
\label{sec:design}

\projecttitle consists of two main phases (see Figure~\ref{fig:system-overview}): \emph{submitting queries} and \emph{answering queries}. In the first phase, an analyst submits a query (along with the execution budget)  to clients via the aggregator and proxies. In the second phase, the query is answered by the clients in the reverse direction.

\subsection{Submitting Queries}
\label{subsec:submitting-queries}

 To perform statistical analysis over users' private data streams, an analyst creates a query using the query model described in $\S$\ref{subsubsec:query-model}.  In particular, each query consists of the following fields, and is signed by the analyst for non-repudiation:
 
\begin{equation}
\label{eq:query}
 Query := \langle Q_{ID}, SQL, A[n], f, w, \delta \rangle
\end{equation}

\begin{itemize}

\item $Q_{ID}$ denotes a unique identifier of the query.  This can be generated by concatenating the
identifier of the analyst with a serial number unique to the analyst.

\item $SQL$ denotes the actual SQL query, which is passed on to clients and executed on their respective personal data.

\item $A[n]$ denotes the format of a client's answer to the query. The answer is an $n$-bit vector where each bit associates with a possible answer value in the form of a ``0'' or ``1'' per index (or answer value range).  

\item $f$ denotes the answer frequency, i.e., how often the query needs to be executed at clients.

\item $w$ denotes the window length for sliding window computations~\cite{slider}. For example, an analyst may only want to aggregate query results for the last ten minutes, which means the window length is ten minutes.

\item $\delta$ denotes the sliding interval for sliding window computations. For example, an analyst may want to update the query results every one minute, and so the sliding interval is set to one minute.
\end{itemize}

After forming the query, the analyst sends the query, along with the query execution budget, to the aggregator. Once  receiving the pair of the query and query budget from the analyst, the aggregator first converts the query budget into system parameters for sampling ($s$) and randomization ($p, q$). We explain these system parameters in the next section $\S$\ref{subsec:answering}. Hereafter, the aggregator forwards the query and the converted system parameters to clients via proxies.

\subsection{Answering Queries}
\label{subsec:answering}

After receiving the query and system parameters, we next explain how the query is answered by clients and processed by the system to produce the result for the analyst. 
The query answering process involves several steps including {\em (i)} sampling at clients for low-latency approximation; {\em (ii)}  randomizing answers for privacy preservation; {\em (iii)}  transmitting answers for anonymization and unlinkability; and  finally, {\em (iv)}  aggregating answers with error estimation to give a confidence level on the approximate output. We next explain the entire workflow using these four steps. (The algorithms are detailed in Appendix~\ref{sec:algorithms}.)

\subsubsection{Step I: Sampling at Clients}
\label{subsec:answer:sampling}

We make use of  approximate computation to achieve low-latency execution by computing over a subset of data items instead of the entire input dataset. Specifically, our work builds on sampling-based techniques~\cite{BlinkDB, quickr-sigmod, approxhadoop, incapprox-www-2016} in the context of ``Big Data'' analytics.
Since we aim to keep the private data stored at individual clients, \projecttitle applies an input data sampling mechanism locally at the clients. In particular, we use {\em Simple Random Sampling}
(SRS)~\cite{srs-sampling}.

\myparagraph{Simple Random Sampling (SRS)}  SRS is considered as a fair way of selecting a sample from a given population since each individual in the population has the same chance of being included in the sample. We make use of SRS at the clients to select clients that will participate in the query answering process. In particular, the aggregator passes the {\em sampling parameter} ($s$) on to clients as the probability of participating in the query answering process. Thereafter,  each client flips a coin with the probability based on the sampling parameter ($s$), and decides whether to participate in answering a query.  Suppose that we have a population of $U$ clients, and each client $i$ has an answer $a_i$. We want to calculate the sum of these answers across the population, i.e., $\sum_{i=1}^{U} a_i $. To compute an approximate sum, we apply the SRS at clients to get a sample of $U'$ clients. The estimated sum is then calculated as follows:
\begin{equation}
\label{eq:srs-estimated-sum}
\hat{\tau} = \dfrac{U}{U'}\sum\limits_{i=1}^{U'} a_i  \pm error
\end{equation}

\noindent Where the error bound  $error$ is defined as:
\begin{equation}
\label{eq:error-bound}
error = t\sqrt{\widehat{Var}(\hat{\tau})}
\end{equation}

\noindent Here, $t$ is a value of the $t$-distribution with $U' - 1$ degrees of freedom at the $1 - \alpha/2$ level of significance, and the estimated variance $\widehat{Var}(\hat{\tau})$ of the sum is:
\begin{equation}\label{eq:variance_sum}
\widehat{Var}(\hat{\tau}) = \dfrac{U^2}{U'}\sigma^2(\dfrac{U -  U'}{U})
\end{equation}

\noindent Where $\sigma^2$ is the sample variance of sum.

Note that we currently assume that all clients produce the input stream with data items following the same distribution, i.e., all clients' data streams belong to the same stratum.  We further extend it for stratified sampling in~$\S$\ref{sec:extensions}.

\subsubsection{Step II: Answering Queries at Clients}
\label{subsec:answer:randomization}

Clients that participate in the query answering process make use of the {\em randomized response} technique~\cite{fox1986randomized} to preserve answer privacy, with \emph{no} synchronization among clients.

\myparagraph{Randomized response} Randomized response protects user's privacy by allowing individuals to answer sensitive queries without providing truthful answers all the time, yet it allows analysts to collect statistical results. Randomized response works as follows: suppose an analyst sends a query to individuals to obtain the statistical result about a sensitive property. To answer the query, a client locally randomizes its answer to the query~\cite{fox1986randomized}.  Specifically, the client flips a coin, if it comes up heads, then the client responds its truthful answer; otherwise, the client flips a second coin and responds ``Yes" if it comes up heads or ``No" if it comes up tails. The privacy is preserved via the ability to refuse responding truthful answers.

Suppose that the probabilities of the first coin and the second coin coming up heads are $p$ and $q$, respectively. The analyst receives $N$ randomized answers from individuals, among which $R_y$ answers are ``Yes". Then, the number of original truthful ``Yes" answers before the randomization process can be estimated as:
\begin{equation}
\label{eq:estimatedAnswer}
E_y = \dfrac{R_y - (1- p)\times q \times N}{p}
\end{equation}

Suppose $A_y$ and $E_y$  are the actual and the estimated numbers of the original truthful ``Yes" answers, respectively. The accuracy loss $\eta$ is then defined as:
\begin{equation}
\label{eq:accuracyloss}
\eta = \abs{ \dfrac{A_y - E_y}{A_y} }
\end{equation}

It has been proven in~\cite{DBLP:journals/fttcs/DworkR14} that, the randomized response mechanism achieves $\epsilon$-differential privacy~\cite{differential-privacy}, where:
\begin{equation}
\label{eq:privacy-level1}
\varepsilon = \ln\big(\dfrac{\Pr[Response = Yes | Truth = Yes]}{\Pr[Response = Yes | Truth = No]} \big)
\end{equation}

More specifically, the randomized response
mechanism achieves $\epsilon$-differential privacy, where:
\begin{equation}
\label{eq:privacy-level2}
\varepsilon = \ln\big(\dfrac{p + (1 - p) \times q}{(1 - p) \times q}\big)
\end{equation}

The reason is: if a truthful answer is ``Yes", then with the probability of `$p + (1 - p) \times q$', the randomized answer will still remain ``Yes". Otherwise, if a truthful answer is ``No", then with the probability of `$(1 - p) \times q$', the randomized answer will become ``Yes".

It is worth mentioning that, combining randomized response with the sampling technique used in Step I, we achieve not only differential privacy but also zero-knowledge privacy~\cite{zero-knowledge-privacy} which is a privacy bound tighter than differential privacy. We prove our claim in Appendix~\ref{sec:privacy-evaluation}.

\begin{figure}[t]
\centering
\includegraphics[scale=0.35]{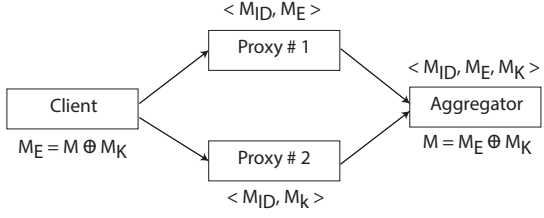}
\caption{XOR-based encryption with two proxies.}
\label{fig:queryans}
\end{figure}

\subsubsection{Step III: Transmitting Answers via Proxies}
\label{subsec:answer:otp}

After producing randomized responses, clients transmit them to the aggregator via the proxies. To achieve anonymity and unlinkability of the clients against the aggregator and analysts, we utilize the XOR-based encryption together with source rewriting, which has been used for anonymous communications~\cite{Chen-2012, Chen-2013,onion-routing,tor}. 
Under the assumptions that:
\begin{itemize}
  \item at least two proxies are not colluding
  \item the proxies don't collude with the aggregator, nor the analyst
  \item the aggregator and analyst have only a local view of the network
\end{itemize}
neither the aggregator, nor the analyst will learn any (pseudo-)identifier to deanonymize or link different answers to the same client. This property is achieved by source rewriting, which is a typical building block for anonymization schemes~\cite{onion-routing,tor}. At the same time the content of the answers is hidden from the proxies using the XOR-based encryption.

\myparagraph{XOR-based encryption} At a high-level, the XOR-based encryption employs extremely efficient bit-wise XOR operations as its cryptographic primitive compared to expensive public-key cryptography. This allows us to support resource-constrained clients, e.g., smartphones and sensors. The underlying idea of this encryption is simple:  if Alice wants to send a message $M$ of length $l$ to Bob, then  Alice and Bob share a secret $M_K$ (in the form of a random bit-string of length
$l$). To transmit the message $M$ privately, Alice sends an encrypted message `$M_E = M \oplus M_K$' to Bob, where `$\oplus$' denotes the bit-wise XOR operation. To decrypt the message,  Bob again uses the bit-wise XOR operation: $M = M_E \oplus M_K$.

Specifically, we apply the XOR-based encryption to transmit clients' randomized answers as follows. At first, each randomized answer is concatenated with its associated query identifier $Q_{ID}$ to build a message $M$:
\begin{equation}
\label{eq:message}
M = Q_{ID}, RandomizedAnswer
\end{equation}

Thereafter, the client generates $(n-1)$ random $l$-bit key strings $M_{K_i}$ with $2 \leq i \leq n$ using a cryptographic pseudo-random number generator (PRNG) seeded with a cryptographically strong random number. The XOR of all $(n-1)$ key strings together forms the secret $M_K$.
\begin{equation}
M_K = \bigoplus_{i=2}^n M_{K_i}
\end{equation}

Next, the client performs an XOR operation with $M$ and $M_K$ to produce an encrypted message $M_E$.
\begin{equation}
M_E = M \oplus M_K
\end{equation}

As a result, the message $M$ is split into $n$ messages $\langle M_E, M_{K_2}, \cdots, M_{K_n}\rangle$. Afterwards, a unique message identifier $M_{ID}$ is generated, and sent along with the split messages to the $n$ proxies via anonymous channels enabled by source rewriting~\cite{onion-routing,tor}.
\begin{equation}
\label{eq:transmit}
\begin{split}
\textnormal{Client} \longrightarrow \textnormal{Proxy}1: \langle M_{ID}, M_E \rangle\\
\textnormal{Client} \longrightarrow \textnormal{Proxy}i: \langle M_{ID}, M_{K_i} \rangle
\end{split}
\end{equation}

Upon receiving the messages (either  $\langle M_{ID}, M_E \rangle$ or $\langle M_{ID}, M_{K_i} \rangle$) from clients, the $n$ proxies transmit these messages to the aggregator. 

The message identifier $M_{ID}$ ensures that $M_E$ and all associated $M_{K_i}$ will be joined later to decrypt the original message $M$ at the aggregator. Note that, $\langle M_{ID}, M_E \rangle$ and all $\langle M_{ID}, M_{K_i} \rangle$ are computationally indistinguishable, which hides from the proxies if the received data contains the encrypted answer or is just a pseudo\-/random bit string.

\subsubsection{Step IV: Generating Result at the Aggregator}
\label{subsec:answer:aggregation}

At the aggregator, all data streams ($\langle M_{ID}, M_E \rangle$ and  $\langle M_{ID}, M_{K_i} \rangle$) are received, and can be joined together to obtain a unified data stream. Specifically, the associated $M_E$ and $M_{K_i}$ are paired by using the message identifier $M_{ID}$. To decrypt the original randomized message $M$ from the client, the XOR operation is performed over $M_E$ and $M_K$: $M = M_E \oplus M_K$ with $M_K$ being the XOR of all $M_{K_i}$: $M_K =
\bigoplus_{i=2}^n M_{K_i}$. As the aggregator cannot identify which of the received messages is $M_E$, it just XORs all the $n$ received messages to decrypt $M$.

The joined answer stream is processed to produce the query results as a sliding window.  
For each window, the aggregator first adapts the computation window to the current start time $t$ by removing all old data items, with $timestamp < t$, from the window.
Next, the aggregator adds the newly incoming data items into the window. Then,  the answers in the window are decoded and aggregated to produce the query results for the analyst.  Each query result is an estimated result which is bound to a range of error due to the approximation. The aggregator estimates this error bound using equation~\ref{eq:error-bound} and produces a confidence interval for the result as: $queryResult \pm errorBound$. The entire process is repeated for every window.

Note that an adversarial client might answer a query many times in an attempt to distort the query result. However, we can handle this problem, for example, by applying the {\em triple splitting} technique~\cite{Chen-2013}.
 
\myparagraph{Error bound estimation} 
We provide an error bound estimation for the aggregate query results. The accuracy loss in \projecttitle is caused by two processes: {\em (i)} sampling and {\em (ii)} randomized response. Since the accuracy loss of these two processes is statistically independent (see $\S$\ref{sec:evaluation}), we estimate the accuracy loss of each process separately. Furthermore, Equation~\ref{eq:srs-estimated-sum} indicates that the error induced by sampling can be described as an additive component of the estimated sum. The error induced by randomized response is contained in the \(a_i\) values in Equation~\ref{eq:srs-estimated-sum}. Therefore, independent of the error induced by randomized response, the error coming from sampling is simply being added upon. Following this, we sum up both independently estimated errors to provide the total error bound of the query results.

To estimate the accuracy loss of the randomized response process, we make use of an experimental method. We run several micro-benchmarks at the beginning of the query answering process without performing the sampling process, to estimate the accuracy loss caused by randomized response. We measure the accuracy loss using Equation~\ref{eq:accuracyloss}.

On the other hand, to estimate the accuracy loss of the sampling process, we apply the statistical theory of the sampling techniques. In particular, we first identify a desired confidence level, e.g., $95$\%. Then, we compute the margin of error  using Equation~\ref{eq:error-bound}. Note that, to use this equation the sampling distribution must be nearly normal. According to the Central Limit Theorem (CLT), when the sample size $U'$ is large enough (e.g., $\geq 30$), the sampling distribution of a statistic becomes close to the normal distribution, regardless of the underlying distribution of values in the dataset~\cite{samplingBySteve}.
\subsection{Practical Considerations}
\label{sec:extensions}
Next, we present three design enhancements to improve the practicality of \projecttitle.

\subsubsection{Stratified Sampling}
\label{subsec:stratified}

As described  in $\S$\ref{subsec:answer:sampling}, we employ Simple Random Sampling (SRS) at clients for approximate computation. The assumption behind using SRS is that all clients produce data streams following the same distribution, i.e., all clients' data streams belong to the same {\em stratum}.   However, in a distributed environment, it may happen that different clients produce data streams with disparate distributions.

Accommodating such cases requires that all strata are considered fairly to have a representative sample from each stratum. To achieve this we use the stratified sampling technique~\cite{incapprox-www-2016, BlinkDB}. Stratified sampling ensures that data from every stratum is proportionally selected (based on the arrival rate) and none of the minorities are excluded. 

To perform stratified sampling, instead of just one sampling parameter $s$, we use a set of sampling parameters $S=\{s_i\}$ where $i \in \{1, \cdots, n\}$ ($n$ is the number of disparate distribution sub-streams in the input stream). All clients within a given stratum $i$ flip a sampling coin with the probability $s_i$ to decide on the participation in the answering process. The value $s_i$ is determined based on the proportional arrival rate of the sub-stream (or stratum). The rest of the answering process remains unchanged (as in $\S$\ref{subsec:answer:randomization}).

Accordingly, we adapt the error estimation for stratified sampling to provide a confidence interval for the query result.
Suppose the clients $C$ come from $n$ sources (disjoint strata) $C_{1}$, $C_{2}$  $\cdots,$ $C_{n}$, i.e., $C = \sum_{i=1}^{n}C_{i}$, and the $i^{\textrm{th}}$ stratum $C_{i}$ has $B_{i}$ clients and each such client $j$ has an associated answer $a_{ij}$ in binary format. 

To compute an approximate sum of the ``Yes'' answers, we first select a sample from all clients $C$ based on the stratified sampling, i.e., we sample $b_{i}$ items from each $i^{\textrm{th}}$ stratum $C_{i}$. Then we estimate the sum from this sample as: $\hat{\tau} = \sum_{i=1}^{n} ( \frac{B_{i}}{b_{i}} \sum_{j=1}^{b_{i}}  a_{ij} ) \pm \epsilon$ 
where the error bound $\epsilon$ is defined as:
\begin{equation}\label{eq:epsilon}
\epsilon = t_{f,1-\frac{\alpha}{2}} \sqrt{\widehat{Var}(\hat{\tau})}
\end{equation}

Here, $t_{f,1-\frac{\alpha}{2}}$ is the value of the $t$-distribution (i.e., \textit{t-score}) with $f$ degrees of freedom and $\alpha$ $=$ $1$ $-$ confidence level. The degree of freedom $f$ is calculated as:
\begin{equation}\label{eq:degree_of_freedom}
f =  \sum_{i=1}^{n} b_{i} - n
\end{equation}

The estimated variance for the sum, $\widehat{Var}(\hat{\tau})$, can be expressed as:
\begin{equation}\label{eq:variance_sum}
\widehat{Var}(\hat{\tau}) = \sum\limits_{i=1}^n B_{i} * (B_{i} - b_{i}) \dfrac{r^2_{i}}{b_{i}}
\end{equation}

Here, $r^2_{i}$ is the population variance in the $i^{\textrm{th}}$ stratum. Similar to the SRS described in $\S$\ref{subsec:answer:sampling}, we use the statistical theories~\cite{samplingBySteve} for stratified sampling to calculate the error bound.

\subsubsection{Historical Analytics}
\label{subsec:historical}
In addition to providing real-time data analytics, we further extended \projecttitle to support historical analytics. The historical analytics workflow is essential for the data warehousing setting, where analysts wish to analyze user behaviors over a longer time period.  To facilitate historical analytics, we support the ``batch analytics" over the users' data at the aggregator. The analyst can analyze users' responses stored in a fault-tolerance distributed storage
(HDFS) at the aggregator to get the aggregate query result over the desired time period.

We further extend the adaptive execution interface for historical analytics, where the analyst can specify query execution budget, for example, to suit dynamic pricing in spot markets in the cloud deployment. Based on the query budget, we perform an additional round of sampling at the aggregator to ensure that batch analytics computation remains within the query budget. We omit the sampling details at the aggregator due to space constraints.

\subsubsection{Query Inversion}
\label{subsec:negation}

In the current setting, some queries may result in very few ``Yes'' truthful answers in users' responses. For such cases,   \privapprox can only achieve lower utility as the fraction of truthful ``Yes" answers gets far from the second randomization parameter $q$ (see experimental results in $\S$\ref{subsec:truthful-answers}). For instance, if $q$ is set to a high value (e.g., $q = 0.9$), having few ``Yes'' answers in the user responses will affect the overall utility of the query result. 

To address this issue, we propose a {\em query inversion} mechanism.  If the fraction of truthful ``Yes'' answers is too small or too large compared to the $q$ value, then the  analysts can invert the query to calculate the truthful ``No'' answers instead of the truthful ``Yes'' answers. In this way, the fraction of truthful ``No'' answers gets closer to $q$, resulting in a higher utility of the query result.
\section{Implementation}
\label{sec:implementation}

We implemented \projecttitle as an end-to-end stream analytics system.  Figure~\ref{fig:implementation} presents the architecture of our prototype.  Our system implementation consists of three main components: {\em (i)} clients,  {\em (ii)}  proxies, and {\em (iii)} the aggregator.

First, the query and the execution budget specified by the analyst are processed by the {\tt initializer} module to decide on  the sampling parameter ($s$) and the randomization parameters ($p$ and $q$). These parameters along with the query are then sent to the clients.

\myparagraph{Clients}  We implemented Java-based clients for mobile devices as well as for personal computers. 
A client makes use of the sampling parameter (based on the {\tt sampling} module) to decide whether to participate in the query answering process ($\S$\ref{subsec:answer:sampling}). If the client decides to participate then the {\tt query answer} module is used to execute the input query on the local user's private data stored in {\tt SQLite}~\cite{sqlite}. The client makes use of the randomized response to execute the query ($\S$\ref{subsec:answer:randomization}). Finally, the randomized answer is encrypted
using the {\tt XOR-based en\-cryption} module; thereafter, the encrypted message and the key messages are sent to the aggregator via proxies ($\S$\ref{subsec:answer:otp}). 

\myparagraph{Proxies} We implemented proxies based on Apache Kafka (which internally uses Apache Zookeeper~\cite{zookeeper} for fault tolerance). 
In Kafka, a {\em topic} is used to define a stream of data items. A stream {\em producer} can publish data items to a topic, and these data items are stored in Kafka servers called {\em brokers}. Thereafter, a {\em consumer} can subscribe to the topic and consume the data items by pulling them from the brokers. In particular, we make use of Kafka APIs to create two main  topics: \emph{key} and  \emph{answer} for transmitting the key message stream and the encrypted answer stream in the XOR-based encryption protocol, respectively ($\S$\ref{subsec:answer:otp}). 
\begin{figure}[t]
\centering
\includegraphics[scale=0.32]{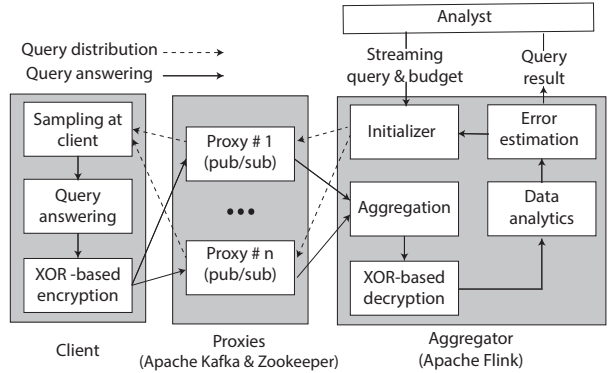}
\caption{Architecture of \projecttitle prototype}% (Historical analytics pipeline at the aggregator is not shown for simplicity.)}
\label{fig:implementation}
\end{figure}

\begin{figure*}[t]
\centering
\includegraphics [width=1\textwidth]{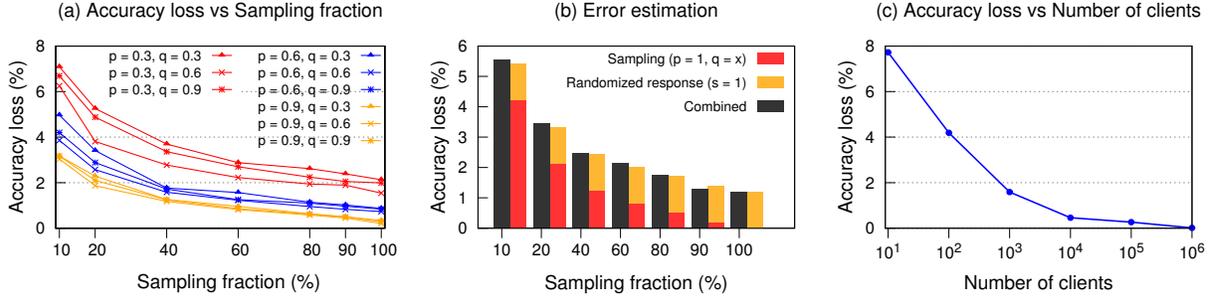}
\caption{{\bf (a)} Accuracy loss with varying sampling and randomization parameters.  
%The number of answers is $10,000$ with 60\% of which being ``Yes'' answers.  
{\bf (b)} Error estimation during the randomized response process and sampling process, combined and individually. 
{\bf (c)} Accuracy loss with varying \# of clients. }
%The utility is measured using Equation~\ref{eq:accuracyloss}.  Before randomization, the fraction of original ``Yes'' answers is 60\% . The system parameters $s$, $p$ and $q$ are set to $0.9$, $0.9$ and $0.6$, respectively.}

%\caption{{\bf (a)} The accuracy loss with different sampling and randomization parameters.  The number of answers is $10,000$ with 60\% of which being ``Yes'' answers.  {\bf (b)} The accuracy loss during the randomized response and sampling process, separately; and the total accuracy loss after the two processes. The utility is measured using equation~\ref{eq:accuracyloss}. The number of original answers is set to $10,000$ with $60\%$ of which being ``Yes'' answers. The accuracy loss of the randomized response process is measured by setting the sampling parameter to $100\%$ ($s = 1$) and the randomized response parameters $p$ and $q$ to $0.3$ and $0.6$, respectively. The accuracy loss of the sampling process is measured separately by setting to $p$ to $1$. {\bf (c)} The accuracy loss with different numbers of clients. The utility is measured using Equation~\ref{eq:accuracyloss}.  Before randomization, the fraction of original ``Yes'' answers is 60\% . The system parameters $s$, $p$ and $q$ are set to $0.9$, $0.9$ and $0.6$, respectively.}
\vspace{-3mm}
\label{fig:benchmark1}
\end{figure*}

\myparagraph{Aggregator}  We implemented the aggregator using Apache Flink for real-time stream analytics and also for historical batch analytics. 
At the aggregator, we first make use of the join method (using the {\tt aggregation} module) to combine the two data streams: {\em (i)} encrypted answer stream and {\em (ii)} key stream. Thereafter, the combined message stream is decoded (using the {\tt XOR-based decryption} module) to reproduce the randomized query answers. These answers are then forwarded to the analytics module.  The {\tt analytics} module processes the answers to provide the query result to the analyst.
Moreover, the {\tt error estimation module} is used to estimate the error ($\S$\ref{subsec:answer:aggregation}), which we implemented using the Apache Common Math library. If the error exceeds the error bound target, a feedback mechanism is activated to re-tune the sampling and randomization parameters to provide higher utility in the subsequent epochs.

For the historical analytics, we asynchronously store the (randomized responses) data  in HDFS~\cite{hdfs} 
at the aggregator (as a separate pipeline, which is not shown in Figure~\ref{fig:implementation} for simplicity). To support historical analytics on the stored data at the aggregator,  we also implemented a sampling method {\em sample()} in Flink to support our sampling mechanism ($\S$\ref{subsec:historical}).

\section{Evaluation: Microbenchmarks}
\label{sec:evaluation}

In this section, we evaluate \projecttitle using a series of microbenchmarks. For all microbenchmark measurements, we report the average over $100$ runs. 

\myparagraph{\#I: Effect of sampling and randomization parameters}
\label{subsec:sampling-randomizing}

We first measure the effect of randomization parameters on the utility and the privacy guarantee of the query results.  In particular, the utility is measured by the query results' accuracy loss (Equation~\ref{eq:accuracyloss}), and privacy is measured by the level of achieved zero-knowledge privacy (Equation~\ref{eq:ezk}). For the experiment, we generated $10,000$ original answers randomly, 60\% of which are ``Yes" answers. The sampling parameter $s$ is set to $0.6$.

Table~\ref{tab:utility-privacy} shows that different settings of the two randomization parameters, $p$ and $q$, do affect the utility and the privacy guarantee of the query results.  The higher $p$ means the higher probability that a client responds with its truthful answer.  As expected, this leads to higher utility (i.e., smaller accuracy loss $\eta$) but weaker privacy guarantee (i.e., higher privacy level $\epsilon$).  In addition, Table~\ref{tab:utility-privacy} also shows that the closer we set the probability $q$ to the fraction of truthful ``Yes'' answers (i.e., $60\%$ in this microbenchmark), the higher utility the query result provides. Nevertheless, to meet the utility and privacy requirements in various scenarios, we should carefully choose the appropriate $p$ and $q$. In practice, the selection of the $\epsilon$ value depends on real-world applications~\cite{epsilon-selection}.

We also measured the effect of sampling parameter on the accuracy loss. Figure~\ref{fig:benchmark1} (a) shows that the accuracy loss decreases with the increase of sampling fraction, regardless of the settings of randomization parameters $p$ and $q$.  The benefits reach diminishing returns after the sampling fraction of 80\%. The system operator can set the sampling fraction using resource prediction model~\cite{conductor-ladis-2010, conductor-podc-2010, conductor-nsdi-2012} for any given SLA.

\myparagraph{\#II: Error estimation}
To analyze the accuracy loss, we first measured the accuracy loss caused by sampling and randomized response \emph{separately}. For comparison, we also computed the total accuracy loss after running the two processes in succession as in \privapprox. In this experiment, we set the number of original answers to $10,000$ with $60\%$ of which being ``Yes'' answers. We measure the accuracy loss of the randomized response process by setting the sampling parameter to $100\%$ ($s = 1$) and the randomization parameters $p$ and $q$ to $0.3$ and $0.6$, respectively. Meanwhile, we measure the accuracy loss of the sampling process without the randomized response process by setting $p$ to $1$.

\begin{table}[t!]

\myfontsize
\centering
\caption{Utility and privacy of query results with different randomization parameters $p$ and $q$.  %Utility is measured by the query result's accuracy loss, and privacy is measured by the level of achieved zero-knowledge privacy.
%The number of original answers is $10,000$ with $60$\% of which being ``Yes" answers. The sampling parameter $s$ is set to $0.6$.
}

\begin{tabular}{c|c|c|c}
\hline $p$ & $q$ & Accuracy loss ($\eta$) & Privacy Level ($\epsilon$) \\
\hline
\hline \multirow{3}{*}{0.3} & 0.3 & 0.0278 & 1.7047\\
  & 0.6 & 0.0262 & 1.3862\\
  & 0.9 & 0.0268 & 1.2527\\
\hline \multirow{3}{*}{0.6} & 0.3 & 0.0141 & 2.5649\\
  & 0.6 & 0.0128 & 2.0476\\
  & 0.9 & 0.0136 & 1.7917\\
\hline \multirow{3}{*}{0.9} & 0.3 & 0.0098  & 4.1820 \\
  & 0.6 & 0.0079 & 3.5263\\
  & 0.9 & 0.0102 & 3.1570\\
\hline
\end{tabular}
\label{tab:utility-privacy}
\end{table}

Figure~\ref{fig:benchmark1} (b) represents that the accuracy loss during the two experiments is statistically independent to each other. In addition, the accuracy loss of the two processes can effectively be added together to calculate the total accuracy loss.

\begin{figure*}[t]
\centering
\includegraphics [width=1\textwidth]{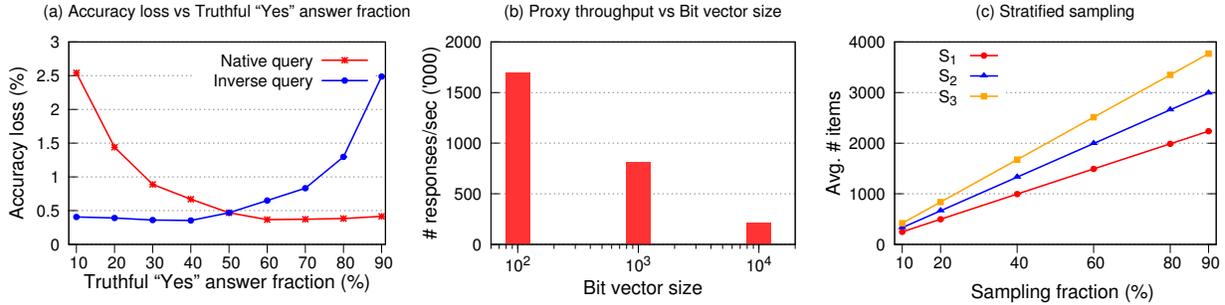}
\caption{{\bf (a)} Accuracy loss for the native and inverse query results with different fractions of truthful ``Yes'' answers. %The utility is measured by the query result's accuracy loss (see Equation~\ref{eq:accuracyloss}).  The number of original answers is set to $10,000$. The sampling and randomization parameters $s$, $p$ and $q$ are set to $0.9$, $0.9$ and $0.6$, respectively. 
{\bf (b)} Throughput of proxies with different bit-vector sizes for the query answer.
{\bf (c)} Average number of sampled data items after stratified sampling with different sampling fractions.}
\vspace{-2mm}
\label{fig:benchmark}
\end{figure*}

%\begin{table}[t!]
%\myfontsize
%\centering
%\begin{tabular}{|r|r|r|r|}
%\hline
%                                                                           & \multicolumn{3}{c|}{Encryption / Decryption}                 \\ \hline 
%                                                                           & Phone    &  Laptop     & Server         \\ \hline \hline
%RSA~\cite{RSA}           & 937 / 126       &   2,770 /698      & 4,909 / 859                \\ \hline
%Goldwasser~\cite{Chen-2012}   & 2,106 /127    &  17,064 / 6,329    & 22,902  / 7,068               \\ \hline
%Paillier~\cite{Paillier}     & 116 / 72       &  489  / 250       & 579  / 309                     \\ \hline
%\projecttitle                                               & 15,026 / 3,262,186  &  943,902 / 16,519,076  & 1,351,937/ 22,678,285     \\ \hline
%\end{tabular}
%
%\caption{Computational overhead of cryptographic operations (\# operations / second).  The public-key crypto schemes use a 1024-bit key.}
%
%
%\label{tab:overhead}
%\end{table}

\begin{table*}[t!]
\caption{Comparison of crypto overheads (\# operations/sec).  The public-key crypto schemes use a $1024$-bit key.}
\myfontsize
\centering
\begin{tabular}{r|r|r|r|r|r|r|r|r|r|r|r|r}
\hline
                                                                           & \multicolumn{6}{c|}{Encryption}      & \multicolumn{6}{c}{Decryption}                \\ \hline 
                                                                           & \multicolumn{2}{c|}{Phone}    &   \multicolumn{2}{c|}{Laptop}     &  \multicolumn{2}{c|}{Server}         &  \multicolumn{2}{c|}{Phone}          &  \multicolumn{2}{c|}{Laptop}          &  \multicolumn{2}{c}{Server} \\ \hline \hline
RSA~\cite{RSA}           		& 937 & $16\times$    &   	2,770   & $341\times$     	& 	4,909  &      $275\times$   & 	126     &      $25890\times$   & 698    & 	$23666\times$         & 859  	 &      $26401\times$             \\ \hline
Goldwasser~\cite{Chen-2012}   & 2,106  & $7\times$    &  	17,064  & $55\times$   	& 	22,902    &   $59\times$   & 	127        &  $25686\times$   & 6,329   &    $2610\times$      	& 7,068 	 &       $3209\times$          \\ \hline
Paillier~\cite{Paillier}     		& 116  &  $129\times$     	&  	489     &    $1930\times$  	& 	579     &       $2335\times$   & 	72       &      $45308\times$   & 250  &         $66076\times$      	& 309       	 &       $73392\times$       \\ \hline \hline
\projecttitle                     &  \multicolumn{2}{c|}{15,026 } 			&  	 \multicolumn{2}{c|}{943,902 }		 	& 	 \multicolumn{2}{c|}{1,351,937} 		  & 	 \multicolumn{2}{c|}{3,262,186 } 		&  \multicolumn{2}{c|}{16,519,076}  			&  \multicolumn{2}{c}{22,678,285}  \\ \hline
\end{tabular}
\vspace{-4mm}

\label{tab:overhead}
\end{table*}

\myparagraph{\#III: Effect of the number of clients}
\label{subsec:varying-clients}
We next analyzed how the number of participating clients affects the utility of the results.  In this experiment, we fix the sampling and randomization parameters $s$, $p$ and $q$ to $0.9$, $0.9$ and $0.6$, respectively, and set the fraction of truthful ``Yes'' answers to $60\%$.  

Figure~\ref{fig:benchmark1} (c) shows that the utility of query results improves with the increase of the number of participating clients, and few clients (e.g., $<100$) may lead to low-utility query results.

Note that increasing the number of participating clients leads to higher network overheads. However, we can tune the number of clients using the sampling parameter $s$ and thus decrease the network overhead (see $\S$\ref{subsec:cs-varying-clients}).

\myparagraph{\#IV: Effect of the fraction of truthful answers}
\label{subsec:truthful-answers}
We also measured the utility of both the native and the inverse query results with different fractions of truthful ``Yes" answers.  For the experiment, we still keep the sampling and randomization parameters $s$, $p$ and $q$ to $0.9$, $0.9$ and $0.6$, respectively, and set the total number of answers to $10,000$.  

Figure~\ref{fig:benchmark} (a) shows that \privapprox achieves higher utility as the fraction of truthful ``Yes" answers gets closer to $60$\% (i.e., the $q$ value). In addition, when the fraction of truthful ``Yes'' answers $y$ is too small compared to the $q$ value (e.g., $y = 0.1$), the accuracy loss is quite high at $2.54$\%. However, by using the query inversion mechanism ($\S$\ref{subsec:negation}), we can significantly reduce  the accuracy loss to $0.4$\%.

\myparagraph{\#V: Effect of answer's bit-vector sizes}
\label{subsec:num-bitvector}
We measured the throughput at proxies with various bit-vector sizes of client answers (i.e., $A[n]$ in~$\S$\ref{subsec:submitting-queries}). We conducted this experiment with a $3$-node cluster (see $\S$\ref{subsec:experimental-setup} for the experimental setup). 
Figure~\ref{fig:benchmark} (b) shows that the throughput, as expected, is inversely proportional to the answer's bit-vector sizes.

\myparagraph{\#VI: Effect of stratified sampling} To illustrate the use of stratified sampling, we generated a synthetic data stream with three different stream sources $S_1$, $S_2$, $S_3$. Each stream source is created with an independent Poisson distribution. In addition, the three stream sources have an arrival rate of $3:4:5$ data items per time unit, respectively. The computation window size is fixed to $10,000$ data items. 

Figure~\ref{fig:benchmark} (c) shows the average number of selected items of each stream source with varying sample fractions using the stratified sampling mechanism.  As expected, the average number of sampled data items from each stream source is proportional to its arrival rate and the sample fractions.

\myparagraph{\#VII: Computational overhead of crypto operations}
\label{subsec:encryption-overheads}

We compared the computational overhead of crypto operations used in \projecttitle and prior systems.  In particular, these crypto operations are XOR in \privapprox, RSA in~\cite{RSA}, Goldwasser-Micali in~\cite{Chen-2012}, and Paillier in~\cite{Paillier}. For the experiment, we measured the number of crypto operations that can be executed on: {\em (i)} Android Galaxy mini III smartphone running Android 4.1.2 with a 1.5 GHz CPU; {\em (ii)} MacBook Air laptop with a 2.2 GHz Intel Core i7 CPU running OS X Yosemite 10.10.2; and {\em (iii)} Linux server running Linux 3.15.0 equipped with a 2.2 GHz CPU with 32 cores. 

Table~\ref{tab:overhead} shows that the XOR operation is extremely efficient compared with the other crypto mechanisms. This highlights the importance of XOR encryption in our design.

\begin{table}[]
\centering

\caption{Throughput (\# operations/sec) at clients}
\label{client-overhead}
\myfontsize
\begin{tabular}{@{}l|l|l|l@{}}
%\toprule
\hline
              No. of operations/sec                         & Phone      & Laptop    & Server                         \\  \hline
\hline
SQLite read                        & 1,162      & 19,646   & 23,418                        \\  \hline
Randomized response		    & 168,938  & 418,668 & 1,809,662                   \\  \hline
XOR encryption                   & 15,026    & 943,902 & 1,351,937                   \\  \hline
{\bf Total  }                                & 1,116      & 17,236   &  22,026                       \\  \hline
\end{tabular}
\end{table}
\myparagraph{\#VIII: Throughput at clients} We measured the throughput at clients.  In particular, we measured the number of operations per second that can be executed at clients for the query answering process. In this experiment, we used the same set of devices as in the previous experiment. 

Table~\ref{client-overhead} presents the throughput at clients. To further investigate the overheads, we measured the individual throughput of three sub-processes in the query answering process: {\em (i)} database read, {\em (ii)} randomized response, and {\em (iii)} XOR encryption. The result indicates that the performance bottleneck in the answering process is actually the database read operation.

\myparagraph{\#IX: Comparison with related work}
\label{subsec:splitx-comparison}
First, we compared \privapprox with SplitX~\cite{Chen-2013}, a high-performance privacy-preserving analytics system. We compare the latency incurred at proxies in both \projecttitle and SplitX. SplitX is geared towards batch analytics, but can be adapted to enable privacy-preserving data analytics over data streams.  Since \projecttitle and SplitX share the same architecture, we compare the latency incurred at proxies in both systems.  

Figure~\ref{fig:splitx-comparison} shows that, with different numbers of clients, the latency incurred at proxies in \projecttitle is always nearly one order of magnitude lower than that in SplitX.  The reason is simple: unlike \projecttitle, SplitX requires synchronization among its proxies to process query answers in a privacy-preserving fashion.  This synchronization creates a significant delay in processing query answers, making SplitX unsuitable for dealing with large-scale stream analytics. More specifically, in SplitX, the processing at proxies consists of a few sub-processes including adding noise to answers, answer transmission, answer intersection, and answer shuffling; whereas, in \privapprox, the processing at proxies contains only the answer transmission.  Figure~\ref{fig:splitx-comparison} also shows that with $10^6$ clients, the latency at SplitX is $40.27$ sec, whereas \projecttitle achieves a latency of just $6.21$ sec, resulting in a $6.48\times$ speedup compared with SplitX.

\begin{figure}[t]
\centering
\includegraphics [width=0.38\textwidth]{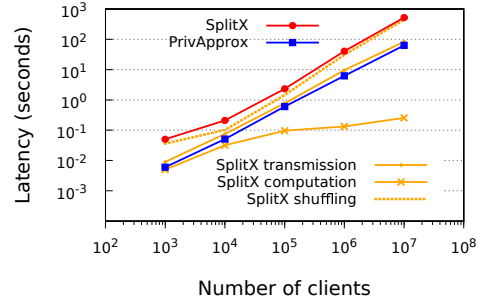}
\caption{Latency comparison b/w SplitX and \privapprox.}%. The y-axis is in log-scale.}
%\vspace{-2mm}
\label{fig:splitx-comparison}
\end{figure}

Next, we compared \privapprox with a recent privacy-preserving analytics system called RAPPOR~\cite{ErlingssonPK14}. Similar to \projecttitle, RAPPOR applies a randomized response mechanism to achieve differential privacy.  However, RAPPOR is not designed for stream analytics, and therefore, we compared \projecttitle with RAPPOR for privacy only. 
To make an ``apple-to-apple" comparison between \projecttitle and RAPPOR in terms of privacy, we make a mapping between the system parameters of the two systems. We set the sampling parameter $s = 1$, and the randomized parameters $p = 1- f$, $q = 0.5$ in \projecttitle, where $f$ is the parameter used in the randomized response process of  RAPPOR~\cite{ErlingssonPK14}. In addition, we set the number of hash functions used in RAPPOR to $1$ ($h = 1$) for a fair comparison. In doing so, the two systems have the same randomized response process.  However, since \projecttitle makes use of the sampling mechanism before performing the randomized response process, \projecttitle achieves stronger privacy. Figure~\ref{fig:rappor-comparison} shows the differential privacy level of RAPPOR and \projecttitle with different sampling fractions $s$.

It is worth mentioning that, by applying the sampling mechanism, \projecttitle achieves stronger privacy (i.e., zero\-/knowledge privacy) for clients. The comparison between differential privacy and zero\-/knowledge privacy is presented in the Appendix~\ref{sec:privacy-evaluation}.

\begin{figure}[t]
\centering
\includegraphics [width=0.38\textwidth]{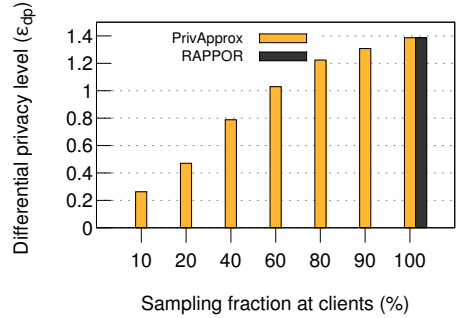}
\caption{Differential privacy level comparison b/w RAPPOR and \privapprox.}%. The y-axis is in log-scale.}
%\vspace{-2mm}
\label{fig:rappor-comparison}
\end{figure}

Recently, several privacy-preserving stream analytics systems have been proposed~\cite{rescuedp,bes-stream,lucas-stream}. These systems make use of the Laplace mechanism~\cite{dwork-tcc,differential-privacy} to achieve differential privacy. In particular, they add Laplace noise to the truthful answers {\em at the aggregator} to protect  the users' privacy. However, their approach relies on strong trust assumptions of the aggregator as well as the connection between clients and the aggregator. On the contrary, \projecttitle applies randomized response mechanism to process users' private data locally at clients under the control of users. Combined with the sampling mechanism, \projecttitle achieves  stronger privacy guarantees (with a tighter bound for \(\epsilon_{dp}\)\-/differential privacy and  \(\epsilon_{zk}\)\-/zero\-/knowledge privacy). 

\section{Evaluation: Case-studies}
\label{sec:case-studies}

We next present our experience using \projecttitle in the following two case studies: {\em (i)} New York City (NYC) taxi ride, and {\em (ii)} household electricity consumption.

\subsection{Experimental Setup}
\label{subsec:experimental-setup}
\myparagraph{Cluster setup} We used a cluster of $44$ nodes connected via a Gigabit Ethernet. Each node contains 2 Intel Xeon quad-core CPUs and 8 GB of RAM running Debian 5.0 with Linux kernel 2.6.26.  We deployed two proxies with Apache Kafka, each of which consists of $4$ Kafka broker nodes and $3$ Zookeeper nodes. We used  $20$ nodes to deploy Apache Flink as the aggregator. In addition, we employed the remaining $10$ nodes to replay the datasets to generate data streams for evaluating our \privapprox system.

\myparagraph{Datasets} For the first case study, we used the {\em NYC Taxi Ride} dataset from the DEBS 2015 Grand Challenge~\cite{nyc-taxi-dataset}. The dataset consists of the itinerary information of all rides across $10,000$ taxies in New York City in 2013.
For the second case study, we used the {\em Household Electricity Consumption} dataset~\cite{electricity-dataset}.  This dataset contains electricity usage (kWh) measured every 30 minutes for one year by smart meters.

\myparagraph{Queries} For the NYC taxi ride case-study, we created a query: ``{\em What is the distance distribution of taxi trips in New York?}". We defined the query answer with $11$ buckets as follows:  [0, 1) mile, [1, 2) miles, [2, 3) miles, [3, 4) miles, [4, 5) miles, [5, 6) miles, [6, 7) miles, [7, 8) miles, [8, 9) miles, [9, 10) miles, and [10, $+\infty$) miles.

For the second case-study, we defined a query to analyze the electricity usage distribution of households over the past 30 minutes. The query answer format is as follows: [0, 0.5] kWh, (0.5, 1] kWh, (1, 1.5] kWh, (1.5, 2] kWh, (2, 2.5] kWh, and (2.5, 3] kWh.

\myparagraph{Evaluation metrics} We evaluated our system using four key metrics: throughput, latency, utility, and privacy level. {\em Throughput} is defined as the number of data items processed  per second, and {\em latency} is defined as the total amount of time required to process a certain dataset. {\em Utility}  is the accuracy loss defined as $| \frac{estimate - exact}{exact} |$, where $estimate$ and $exact$ are the query results produced by applying \privapprox and the native computation, respectively. Finally, {\em privacy level} (\(\epsilon_{zk}\)) is calculated  using equation~\ref{eq:ezk}.  For all measurements, we report the average over $10$ runs.

\begin{figure}[t]
\centering
\includegraphics [width=0.49\textwidth]{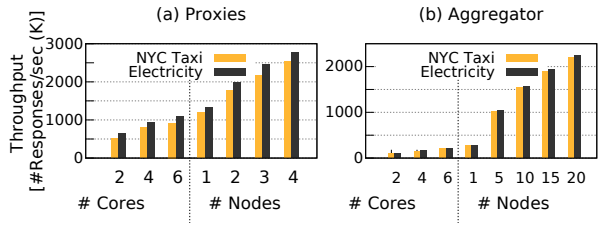}
\caption{Throughput at proxies and the aggregator with different numbers of CPU cores and nodes.}
%\vspace{-5mm}
\label{fig:throughput-proxies-aggregator}
\end{figure}

\subsection{Results from Case-studies}

\begin{figure*}[t]
\centering
\includegraphics [width=1\textwidth]{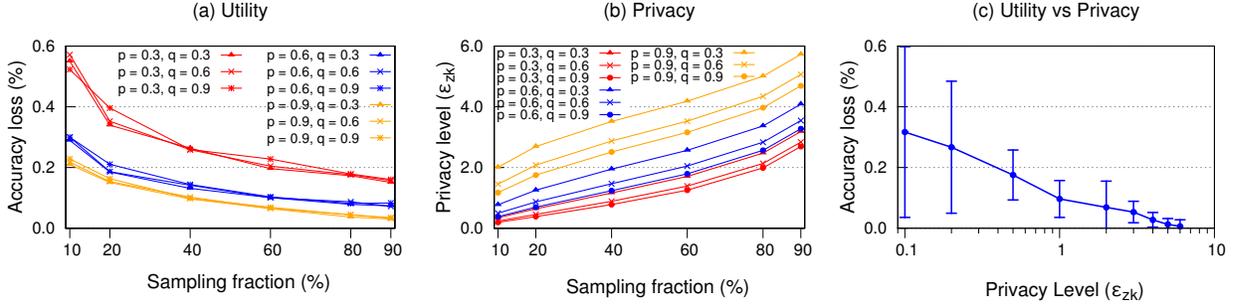}
\caption{Results from the NYC taxi case-study with varying sampling and randomization parameters: {\bf (a)}  Utility, {\bf (b)} Privacy level, {\bf (c)} Comparison between utility and privacy.} %We note that the randomization parameters $p$ and $q$ are varied in the range of (0, 1), and the sampling parameter $s$ is calculated using  Equation~\ref{eq:ezk}.}
%All results are generated based on 100 independent runs.}
\vspace{-5mm}
\label{fig:utility-privacy-taxi}
\end{figure*}

\subsubsection{Scalability}
\label{subsec:cs-scalability}
We measured the scalability of the two main system components: proxies and the aggregator. We first measured the throughput of proxies with various numbers of CPU cores (scale-up) and different numbers of nodes (scale-out). This experiment was conducted on a cluster of $4$ nodes. Figure~\ref{fig:throughput-proxies-aggregator} (a) shows that, as expected, the throughput at proxies  scales quite well with the number of CPU cores and
nodes. In the NYC Taxi case-study, with $2$ cores, the throughput of each proxy is $512,348$ answers/sec, and with $8$ cores (1 node) the throughput is $1,192,903$ answers/sec; whereas, with a cluster of $4$ nodes each with $8$ cores, the throughput of each proxy reaches $2,539,715$ answers/sec. In  the household electricity case-study, the proxies achieve relatively higher throughput because the message size is smaller than in the NYC Taxi case-study.

We next measured the throughput at the aggregator. 
Figure~\ref{fig:throughput-proxies-aggregator} (b) depicts that the aggregator also scales quite well when the number of nodes for aggregator increases. The throughput of the aggregator, however, is much lower than the throughput of  proxies due to the relatively expensive {\tt join} operation and the analytical computation at the aggregator.  We notice that the throughput of the aggregator in the household electricity case study does not significantly improve in comparison to the first case study.  This is because the difference in the size of messages between the two case studies does not affect much the performance of the {\tt join} operation and the analytical computation.

\subsubsection{Network Bandwidth and Latency}
\label{subsec:cs-varying-clients}

Next, we conducted the experiment to measure the network bandwidth usage. By leveraging the sampling mechanism at clients, our system reduces network traffic  significantly. Figure~\ref{fig:bandwidth-latency} (a) shows the total network traffic transferred from clients to proxies with different sampling fractions. In the first case study, with the sampling fraction of $60$\%, \projecttitle can reduce the network traffic by $1.62\times$; whereas in the second case study, the reduction is $1.58\times$.

Besides the benefit of saving network bandwidth, \projecttitle achieves also lower latency in processing query answers by leveraging approximate computation. To evaluate this advantage, we measured the  effect of sampling fractions on the latency of processing query answers. Figure~\ref{fig:bandwidth-latency} (b) depicts the latency with different sampling fractions at clients. For the first case-study, with the sampling fraction of $60$\%, the latency is $1.68\times$
lower than the execution without sampling; whereas, in the second case-study this value is $1.66 \times$ lower  than the execution without sampling.

\begin{figure}[t]
\centering
\includegraphics [width=0.49\textwidth]{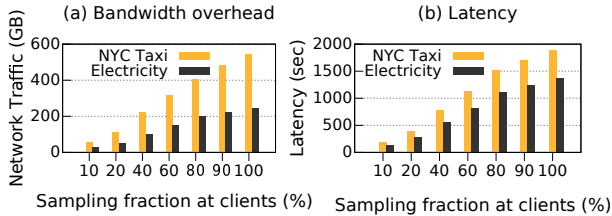}
\caption{Total network traffic and latency at proxies with different sampling fractions at clients.}
%\vspace{-5mm}
\label{fig:bandwidth-latency}
\end{figure}

\subsubsection{Utility and Privacy}
\label{subsec:cs-sampling-randomizing}

Figure~\ref{fig:utility-privacy-taxi} (a)(b)(c) show the utility, the privacy level, and the trade-off between them, respectively, with different sampling and randomization parameters.  The randomization parameters $p$ and $q$ vary in the range of (0, 1), and the sampling parameter $s$ is calculated using  Equation~\ref{eq:ezk}. Here, we show results only for NYC Taxi dataset. As the sampling parameter $s$ and the first randomization parameter $p$ increase, the utility of query results improves (i.e., accuracy loss gets smaller) whereas the privacy guarantee gets weaker (i.e., privacy level gets higher). Since the New York taxi dataset is diverse, the accuracy loss and the privacy level change in a non-linear fashion with different sampling fractions and randomization parameters.
Interestingly, the accuracy loss does not always decrease as the second randomization parameter $q$ increases. The accuracy loss gets smaller when $q = 0.3$.  This is due to the fact that the fraction of truthful ``Yes" answers in the dataset is $33.57$\% (close to $q=0.3$).

\subsubsection{Historical Analytics}
\label{subsec:cs-historical-analytics}
To analyze the performance of \projecttitle for historical analytics, we executed the queries on the datasets stored at the aggregator. Figure~\ref{fig:historical-analysis} (a) (b) present the latency and throughput, respectively, of processing historical datasets with different sampling fractions. We can achieve a speedup of $1.86 \times$ over native execution in historical analytics by setting the sampling fraction to $60$\%. 

We also measured the accuracy loss when the approximate computation was applied (for the NYC Taxi case-study).  Figure~\ref{fig:historical-analysis} (c) shows the accuracy loss in processing historical data with different sampling fractions. With the sampling fraction of $60$\%, the accuracy loss is only less than $1$\%.

\begin{figure*}[t]
\centering
\includegraphics [width=1\textwidth]{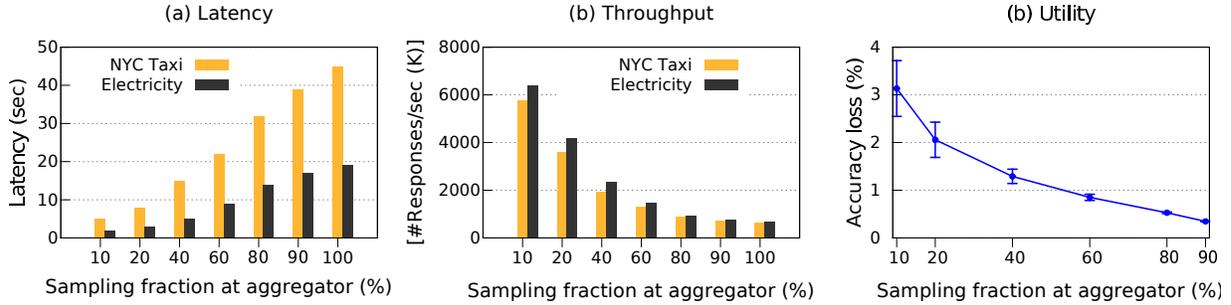}
\caption{Historical analytics results with varying sampling fractions: {\bf (a)} Latency, {\bf (b)} Throughput, and {\bf (c)} Utility.}
\vspace{-5mm}
%\vspace{-2mm}
\label{fig:historical-analysis}
\end{figure*}

\section{Related Work}
\label{sec:related}

\myparagraph{Privacy-preserving analytics} Since the notion of differential privacy~\cite{differential-privacy, dwork-tcc}, a plethora of systems have been proposed to provide differential privacy with  centralized trusted databases supporting linear queries~\cite{linear-queries}, graph queries~\cite{graph-queries}, histogram queries~\cite{histogram-queries}, Airavat-MapReduce~\cite{airavat}, SQL-type PINQ queries~\cite{pinq, frank-ratul-network-privacy, goldberg-wPINQ} and even general programs, such as GUPT~\cite{gupt} and Fuzz~\cite{fuzz}. In practice, however, such central trust can be abused, leaked, or subpoenaed~\cite{violation1,violation2,violation3,violation4}.

To overcome the limitations of the centralized database schemes, recently a flurry of systems have been proposed with a %~\cite{distributed-noise,mobile-ad, Chen-2012, RSA, Chen-2013, djoin}
focus on achieving users' privacy (mostly, differential privacy) in a distributed setting where the private data is kept locally.  Examples include Privad~\cite{privad}, PDDP~\cite{Chen-2012}, DJoin~\cite{djoin}, SplitX~\cite{Chen-2013}, $\pi$Box~\cite{pi-box},  KISS~\cite{KISS}, Koi~\cite{koi}, xBook~\cite{xBook}, Popcorn~\cite{popcorn}, and many other systems~\cite{distributed-noise,mobile-ad, RSA}.
However, these  systems are designed to deal with the ``one-shot" batch queries only, whereby the data is assumed to be static during the query execution. 

To overcome the limitations of the aforementioned systems, several differentially private stream analytics systems have been proposed recently~\cite{differential-privacy-obs, private-statistic, privacy-monitor, privacy-time-series, Paillier, stream-monitor, stream-privacy-fault}.  
Unfortunately, these systems still contain several technical shortcomings that limit their practicality.  One of the first systems~\cite{differential-privacy-obs} updates the query result only if the user's private data changes significantly, and does not support stream analytics over an unlimited time period.  Subsequent systems~\cite{private-statistic, stream-privacy-fault} remove the limit on the time period, but introduce extra system overheads.  Some
systems~\cite{privacy-time-series, Paillier} leverage expensive secret sharing cryptographic operations to produce noisy aggregate query results.  These protocols, however, cannot work at large scale under churn; moreover, in these systems, even a single malicious user can substantially distort the aggregate results without detection.  Recently, some other privacy-preserving distributed stream monitoring systems have been proposed~\cite{stream-monitor, privacy-monitor}.
However, they all require some form of synchronization, and are tailored for heavy-hitter monitoring only. Streaming data publishing systems like~\cite{rescuedp} use a stream-privacy metric at the cost of relying on a trusted party to add noise.
In contrast,  \privapprox does not require a trusted proxy or aggregator to add noise. Furthermore, \privapprox  provides stronger privacy properties (zero-knowledge privacy).

\myparagraph{Sampling and randomized response}  Sampling and randomized response, also known as {\em input perturbation} techniques, are being studied in the context of privacy-preserving analytics, albeit they are explored separately. For instance,  the relationship between {\em sampling} and privacy is being investigated to provide k-anonymity~\cite{sampling-privacy1}, differential privacy~\cite{gupt}, and crowd-blending privacy~\cite{crowd-blending}.
In contrast, we show that sampling combined with randomized response achieves the zero-knowledge privacy, a privacy bound strictly stronger that the state-of-the-art differential privacy. Furthermore, \privapprox achieves these guarantees for stream processing with a distributed private dataset.

{\em Randomized response}~\cite{fox1986randomized, randomized-response} is a surveying technique in statistics, since 1960s, for collecting sensitive information via input perturbation. Recently, Google, in a system called RAPPOR~\cite{ErlingssonPK14}, made use of randomized response for privacy-preserving analytics for the Chrome browser. RAPPOR provides differential privacy  (\(\epsilon_{dp}\)) for clients while enabling analysts to collect various types of statistics. Like RAPPOR,  \privapprox utilizes randomized response. However, RAPPOR is  designed for heavy-hitter collection, and does not deal with the situation where clients' answers to the same query are changing over time. Therefore, RAPPOR does not fit well with the stream analytics. Furthermore, since we combine randomized response with sampling, \privapprox (\(\epsilon_{zk}\)) provides a privacy bound tighter than RAPPOR (\(\epsilon_{dp}\)). 

\myparagraph{Secure multi-party computation} In theory, secure multi-party
computation (SMC)~\cite{GoldreichMW87,Yao82b} could be used  for privacy-preserving analytics.
It is, however, expensive for real-world deployment, especially for stream analytics,
even though there have been several proposals reducing
SMC's computational overhead~\cite{GordonMRW13,JareckiS07,LindellP07,LindellP15,PinkasSSW09,Woodruff07}.
Furthermore, SMC guarantees input-privacy during computation, but is orthogonal to output-privacy as provided by differential privacy.

\myparagraph{Approximate computing} Approximation techniques such as sampling~\cite{stratified-sampling, sampling-2, Chaudhuri-sampling}, sketches~\cite{sketching}, and online aggregation~\cite{online-aggregation} have been well-studied over the decades in the databases community. Recently, sampling-based systems (such as ApproxHadoop~\cite{approxhadoop}, BlinkDB~\cite{BlinkDB, BlinkDB-2}, IncApprox~\cite{incapprox-www-2016}, Quickr~\cite{quickr-sigmod}, StreamApprox~\cite{streamapprox})  and online aggregation-based systems (such as MapReduce Online~\cite{mapreduce-online, OAMR}, G-OLA~\cite{gola}) have also been shown effective for ``Big Data" analytics.

We  build on the advancements of sampling-based techniques.  However, we differ in two crucial aspects. First, we perform sampling in a distributed way as opposed to sampling in a centralized dataset.  Second, we extend sampling with randomized response for privacy-preserving analytics.

\section{Conclusion}
\label{sec:conclusion}

In this technical report, we presented \privapprox, a privacy-preserving stream analytics system. Our approach builds on the observation that both computing paradigms --- privacy-preserving data analytics and approximate computation --- strive for approximation, and can be combined together to leverage the benefits of both. Our evaluation shows that \projecttitle not only improves the performance to support real-time stream analytics, but also achieves provably stronger privacy guarantees than the state-of-the-art differential privacy.  This technical report is the complete version of our conference publication~\cite{privapprox-atc17}. \privapprox's source code is publicly available: \href{https://privapprox.github.io}{https://PrivApprox.github.io}.

\section*{Appendices}
\appendix
\addcontentsline{toc}{section}{APPENDICES}
\section{Algorithms}
\label{sec:algorithms}

\begin{algorithm}[t]

\myfontsize
\SetLine

\textbf{Input}:  Query and query budget\\  
%$T \leftarrow query.valid\_time$\;
%\textbf{System parameters:}\\

%{\commentfontsize  // {\em probability of the sampling coin comes heads}}\\
%$s \leftarrow $  {\tt costFunction}({\em budget})\;  
%{\commentfontsize  // {\em probability of the first randomizing coin comes heads}}\\
%$p \leftarrow $  {\tt costFunction}({\em budget})\; 
%
%{\commentfontsize  // {\em probability of the second randomizing coin comes heads}}\\
%$q \leftarrow$  {\tt costFunction}({\em budget})\; 

$s, p, q \leftarrow $  {\tt costFunction}({\em budget});  {\commentfontsize  // {\em $s$ is the sampling parameter}}\\
{\commentfontsize // {\em $p$ and $q$ are the randomizing parameters}}\\

%{\commentfontsize  // {\em probability of the first randomizing coin comes heads}}\\
%$p \leftarrow $  {\tt costFunction}({\em budget})\; 
%
%{\commentfontsize  // {\em probability of the second randomizing coin comes heads}}\\
%$q \leftarrow$  {\tt costFunction}({\em budget})\; 
%
%
%{\commentfontsize  // {\em SQL  query}}\\
%$SQL \leftarrow Query.SQL$\;  

$A[n] \leftarrow Query.A[n]$;  {\commentfontsize  // {\em Answer bit-vector }}\\

{\bf \underline {execute---At---Client}}() {\commentfontsize  //Execute the method every $f$ seconds}\\
\Begin{
   %  {\commentfontsize  // {\em Flip the sampling coin to decide participating in answering the query}}\\
     $flipResult0 \leftarrow$ {\tt coinFlip}($s$);{\commentfontsize  // {\em Flip the sampling coin}}\
     
     \If{$flipResult0 == $``{\tt Heads}"} {
         $client.participate \leftarrow$ ``{\tt True}" \;
         $truthfulAnswer$  $\leftarrow$ {\tt localDataProcess}($Query.SQL$)\; 
        % {\commentfontsize  // {\em Randomized Response}}\\
        % {\commentfontsize  // {\em Flip the first randomizing coin}}\\
         $flipResult1 \leftarrow$ {\tt coinFlip}($p$);  {\commentfontsize  // {\em First randomizing coin}}\\
         
         \If{$flipResult1 ==$``{\tt Heads}"} {
            % {\commentfontsize  // {\em Processing the local data to get a truthful answer}}\\
             $A[n]  \leftarrow$  $truthfulAnswer$;  {\commentfontsize  // {\em Process the local data}}\\
         
         }
         \Else {
           %  {\commentfontsize  // {\em Flip the second randomizing coin}}\\
             $flipResult2 \leftarrow$ {\tt coinFlip}($q$);   {\commentfontsize  // {\em Second coin}}\\
             
             \If{$flipResult2 ==$``{\tt Heads}"} {
                  %{\commentfontsize  // {\em ``Yes"}}\\
                  {\commentfontsize  // for all {\bf ``Yes"} in the bit-vector}\\ 
                  $\forall i\in \{1, ...,n\}$:   {\bf if}$(A[i] == 1)$ $A[i]  \leftarrow 1$; 
             }
             \Else {
                 %{\commentfontsize  // {\em ``No"}}\\
                 	{\commentfontsize  // for all {\bf ``No"} in the bit-vector}\\ 
                                   $\forall i\in \{1, ...,n\}$:  {\bf if}$(A[i] == 1)$ $A[i]  \leftarrow 0$; 
		                  
             }
         
         }
         
         %{\commentfontsize // {\em Forward the query answer to aggregator via proxies}}\\
         {\tt sendAnswer}($A[n]$); {\commentfontsize // {\em Send the answer to the aggregator}}\\
     }
}

\caption{\bf Answering a query at clients}
\label{alg:client-algo}
\end{algorithm}
 
In this section, we describe the algorithmic details of \privapprox's system protocol. We present two algorithms: {\em (i)} the workflow at a client carrying out sampling and randomization; and {\em (ii)} the workflow at the aggregator.

\myparagraph{\#I: Workflow at a client} Algorithm~\ref{alg:client-algo} summarizes how a client processes a query.  Each client maintains its personal data in a local database. Upon receiving a query, the client first flips a sampling coin to decide whether to answer the query or not. If the coin comes up heads, then the client executes the query on its local database to create a truthful answer to the query. The truthful answer is in the form of bit buckets with a ``1'' or ``0'' per bucket, depending on whether or not the ``Yes'' answer falls within that bucket. The answer may have more than one bucket containing a ``1'' depending on the query.  Next, the client randomizes the answer using the randomized response mechanism. In particular, the client flips the first randomization coin, if it comes up heads, the client responds its truthful answer. If it comes up tails, then the client flips the second randomization coin and reports the result of this coin flipping. The randomized answer is still in the binary string format after the randomization process.

\myparagraph{\#II: Workflow at the aggregator}
The aggregator receives clients' data streams from the proxies, and joins them to obtain a combined data stream. Thereafter, the aggregator processes the joined stream to produce the output for the analyst. Algorithm~\ref{alg:aggregator-algo2} describes the overall process at the aggregator. The algorithm computes the query results as a sliding window computation over the incoming answer stream. For each window, the aggregator first adapts the computation window to the current start time $t$ by removing all old data items, i.e., with $timestamp < t$, from the computation window. Next, the aggregator adds the new incoming data items in the window and decrypts the answers in the data stream. Thereafter, the input data items for a window are aggregated to produce the query output for the analyst. We also estimate the error in the output due to approximation and randomization. The aggregator estimates this error bound and defines a confidence interval for the result as: $queryResult \pm errorbound$. The entire process is repeated for the next window, with the updated windowing parameters and query budget (for the adaptive execution).

 \begin{algorithm}[t]

\myfontsize
\SetLine

\textbf{Input}:  
$w \leftarrow query.time\_window$; 
$\delta \leftarrow query.slide\_interval$; \\
$t$ $\leftarrow$ start time of window;\\ 
%$s$ $\leftarrow$ sampling rate\;
{\bf \underline {execute---At---Aggregator}}() {\commentfontsize  //Execute the method every $\delta$ seconds}\\
\Begin{
$window$ $\leftarrow$  $\emptyset$;   {\commentfontsize  // {\em List of items in the window}}\\
%$sample$ $\leftarrow$ $\emptyset$; {\commentfontsize // {\em Set of items sampled from the window}}\\
    
	\ForEach{(window $w$ in the incoming stream )}{
		
		%{\commentfontsize // {\em Remove all old items from window}}\\
		\ForAll{$elements$ in $w$}{
			\If{$element$.timestamp $<$ $t$}{
				$w$.{\tt remove}($element$); {\commentfontsize // {\em Remove all old items}}\\
			}
		}		
		%{\commentfontsize // {\em Add new items to the window}}\\
		$w$ $\leftarrow$ $w$.{\tt insert}({\em new items}); {\commentfontsize // {\em Add new items}}\\
		
		%{\commentfontsize // {\em Do sampling of $window$ with sample rate $s$}}\\		 	
		%$sample$ $\leftarrow$ {\tt doSampling}($window$, $s$)\; 		
		
		$queryResult$ $\leftarrow$  $\emptyset$;   {\commentfontsize  // {\em query result}}\\
		\ForAll{$answer$ in the $sample$}{		
		
		%     {\commentfontsize // {\em Decrypt the randomized responses}}\\
              $queryAnswer$ $\leftarrow$ {\tt decryptAnswer($answer$)}\;      %{\commentfontsize // {\em Decrypt the randomized responses}}\\
              
              {\commentfontsize // {\em Get query results associate with analyst IDs}}\\
              $queryResult$ $\leftarrow$ {\tt aggregateAnswer($queryAnswer$)}\;     % {\commentfontsize // {\em Get query results associate with analyst IDs}}\\          
		 }	 
		 
		%{\commentfontsize // {\em Estimate error for the query result}}\\
		$queryResult \pm error$ $\leftarrow$ {\tt estimateError}($queryResult$)\; %{\commentfontsize // {\em Estimate error for the query result}}\\
		 
		%{\commentfontsize // {\em Update the start time for the next window}}\\
		$t$ $\leftarrow$ $t$ + $\delta$; {\commentfontsize // {\em Update the start time for the next window}}\\
   }
}

\caption{\bf Generating query result at the aggregator}
\label{alg:aggregator-algo2}
\end{algorithm}
\section{Discussion}
\label{sec:discussion}
In this section, we discuss some approaches that could be used to meet our assumptions listed in $\S$\ref{sec:design-assumptions}.

\myparagraph{Stratified sampling} In our design in $\S$\ref{sec:design}, we currently assume that the input stream is already stratified based on the source of event, i.e., the data items within each stratum follow the same distribution. However, it may not be the case.  We next discuss two proposals for the stratification of evolving data streams, namely bootstrap~\cite{bootstrap-Dziuda,bootstrap-efron,bootstrap-Odile} and semi-supervised learning~\cite{semi-supervised-algorithm}. 

Bootstrap~\cite{bootstrap-Dziuda,bootstrap-efron,bootstrap-Odile} is a well-studied non-parametric sampling technique in statistics for the estimation of distribution for a given population. In particular, the bootstrap method randomly selects ``bootstrap samples" with replacement to estimate the unknown parameters of a population; for instance, by averaging the bootstrap samples. We can employ a bootstrap-based estimator for the stratification of incoming sub-streams. Alternatively, we could also make use of a semi-supervised algorithm~\cite{semi-supervised-algorithm} to stratify a data stream. The advantage of the algorithm is that it can work with both the labeled and unlabeled data stream to train a classification model.

\myparagraph{Virtual cost function} Currently, in our implementation described in $\S$\ref{sec:implementation}, for a given user-specified query budget about privacy $\epsilon_{zk}$, the sampling and randomizing parameters can be computed using the reversed function of equation~\ref{eq:ezk}. However, for the query budget involving available computing resources or latency requirements (SLAs)---we currently assume that there exists a virtual function that determines the sampling parameter based on the query budget. We recommend two existing approaches---Pulsar~\cite{pulsar}  and resource prediction model~\cite{resource-prediction-SML,resource-prediction-models}---to design and implement such a virtual function for the given computing resources and latency requirements, respectively.

Pulsar \cite{pulsar} is a ``virtual datacenter (VDC)"  system that allows users to allocate resources based on tenants' demand.  The system proposes a multi-resource token bucket algorithm that uses a pre-advertised cost model for supporting workload independent guarantees.  We could apply a similar cost model based on Pulsar as follows: A data item to be processed could be considered as a request, and ``amount of resources" needed to process these items could be the cost in tokens. Since the resource budget gives total resources (\textit{here tokens}) to be used, we could find the number of clients, i.e., the sampling fraction at clients, that can be processed using these resources. 

To find the sampling parameter for a given latency budget, we could use a resource prediction model~\cite{conductor-nsdi-2012, conductor-podc-2010, conductor-ladis-2010}. The resource prediction model could build by analyzing the diurnal patterns of resource usage~\cite{googlecluster}, and predicts the resource requirement to meet SLAs leveraging statistical machine learning~\cite{resource-prediction-SML,resource-prediction-models}. Once we have the resource requirement  in place to meet a given SLA---we can find the appropriate sampling parameter by using the above suggested method similar to Pulsar.

\section{Privacy Analysis and Proofs}
\label{sec:privacy-evaluation}

\projecttitle achieves three privacy properties {\em (i)}~zero-knowledge privacy, {\em (ii)}~anonymity, and {\em (iii)}~unlinkability as introduced in $\S$\ref{subsec:privacy-properties}. 

%\subsection{Zero\-/knowledge privacy} 

%\section{Privacy Analysis}
%\label{subsec:privacy}

\myparagraph{{Property \# I: }{Zero\-/knowledge privacy}}  We show that the system designed in Section~\ref{sec:design} achieves \(\epsilon_{zk}\)\-/zero\-/knowledge privacy and prove a tighter bound for \(\epsilon_{dp}\)\-/differential privacy, than what generally follows from zero\-/knowledge privacy~\cite{zero-knowledge-privacy}. The basic idea is that all data from the clients is already differentially private due to the use of randomized response. Furthermore, the combination with pre-sampling at the clients makes it zero\-/knowledge private as well. Following the privacy definitions, any computation upon the results of differentially, as well as, zero\-/knowledge private algorithms is guaranteed to be private again.

In the following paragraphs we show that:
\begin{itemize}
  \item Independent and identically distributed (IID) sampling decomposes easily and is self\-/commutative. See Lemma~\ref{lem:sampling-commutes}.
  \item Sampling and randomized response mechanisms commute. See Lemma~\ref{lem:sampling-randomizing-commutes}.
  \item Pre\-/sampling and post\-/sampling can be traded arbitrarily arou\-nd a randomized response mechanism. See Corollary~\ref{cor:arbitrary-sampling}.
  \item A \(\epsilon_{zk}\)\-/zero\-/knowledge privacy bound for our system. See Theorem~\ref{thrm:zkp}
  \item A \(\epsilon_{dp}\)\-/differential privacy bound for our system. See Theorem~\ref{thrm:dp}
  \item Our differential privacy bound is tighter than the general differential privacy bound derived from a zero\-/knowledge private algorithm. See Proposition~\ref{prop:tighter-dp}.

%
%
%  \item We use the result from~\cite{crowd-blending}, which shows that pre\-/sampling combined with crowd\-/blending privacy achieves zero\-/knowledge privacy.
%  \item We replace the crowd\-/blending private algorithm by an algorithm that achieves differential privacy using the result that differential privacy always fulfills crowd\-/blending privacy from the same authors.
%  \item We show that pre\-/sampling can be replaced (and indeed arbitrarily traded) by post\-/sampling when combined with randomized response without affecting the privacy guarantees.
%  \item We combine the results for pre\-/sampling, randomized response and post-sampling to prove zero-knowledge privacy.
%  \item We use a result from~\cite{sampling-privacy3} to derive a tighter bound for differential privacy given previous results.
\end{itemize}

%Privacy notions like differential privacy and zero\-/knowledge privacy guarantee that an adversary \(A\) with access to arbitrary external information \(z\) and a sanitized version of a database \(San(D)\) cannot breach the privacy of individual \(i \in \mathrm{P}\) contained in database \(D \subseteq \mathrm{P}\) for a population \(\mathrm{P}\). Privacy would be breached if any adversary could gain more information from \(San(D)\), than can be obtained from a pre\-/processed database \(D_{-i}= D \setminus {i}\) times a constant factor \(e^\epsilon\). Database \(D_{-i}\) does not contain individual \(i \in D\). Differential privacy describes the maximum advantage an adversary has by having access to \(San(D)\) instead of \(San(D_{-i})\). Zero\-/knowledge privacy is even stronger as it bounds the advantage an adversary gains from accessing \(San(D)\) instead of having access to some aggregate information~\(T(D_{-i})\) about the remaining database. % to be bound by constant factor above the information that can be learned by an arbitrary algorithm \(S\) (Simulator) applied to any aggregation function \(T(D_{-i})\).

Intuitively, differential privacy limits the information that can be learned about any individual \(i\) by the difference occurring from either including \(i\)'s sensitive data in a differentially private computation or not. Zero\-/knowledge privacy on the other hand also gives the adversary access to aggregate information about the remaining individuals denoted as \(D_{-i}\). Essentially everything that can be learned about individual \(i\) can also be learned by having access to some aggregate information upon \(D_{-i}\). %See~\cite{zero-knowledge-privacy} for examples of aggregate information.

%The aggregation may leak more information about \(i\) than the sanitizing algorithm, which means that the attacker in the zero-knowledge-privacy setting is stronger

% that only includes information from the remaining individuals.

%Let \(San()\) be a sanitizing algorithm, which takes a database \(D\) of sensitive attributes \(a_i\) of individuals \(i \in \mathrm{P}\) from a population \(\mathrm{P}\) as input and outputs a differentially private or zero\-/knowledge private result \(San(D)\).
%For brevity we write \(San_A(D)\) for the output of the adversary \(A\) with arbitrary external input \(z\) and access to \(San(D)\). Similarly, we omit the explicit usage of the external information \(z\) as input to the simulator \(S\), as well as the total size of the database. See~\cite{crowd-blending} Definition 1 and 2 for the extended notation. Let \(O \subseteq Range(San_A)\) be any set of possible outputs.

Let \(San()\) be a sanitizing algorithm, which takes a database \(D\) of sensitive attributes \(a_i\) of individuals \(i \in \mathrm{P}\) from a population \(\mathrm{P}\) as input and outputs a differentially private or zero\-/knowledge private result \(San(D)\). For brevity, we write \(San_A(D)\) for the output of the adversary \(A\) with arbitrary external input \(z\) and access to \(San(D)\). Similarly, we omit the explicit usage of the external information \(z\) as input to the simulator \(S\), as well as the total size of the database. See~\cite{crowd-blending} Definition 1 and 2 for the extended notation. Let \(O \subseteq Range(San_A)\) be any set of possible outputs.
\(\epsilon_{dp}\)-differential privacy can be defined as
\begin{equation}\label{eq:differential-privacy}
  \Pr[San_A(D) \in O] \leq e^{\epsilon_{dp}} \cdot \Pr[San_A(D_{-i}) \in O]
\end{equation}
while \(\epsilon_{zk}\)-zero\-/knowledge privacy is defined as
\begin{equation}\label{eq:zero-knowledge-privacy}
  \Pr[San_A(D) \in O] \leq e^{\epsilon_{zk}} \cdot \Pr[S(T(D_{-i}),|D|) \in O].
\end{equation}

%Before proving the desired properties, we need to introduce some notation. Let \(D \subseteq \mathrm{P}\) be a database of individuals. Each individual \(i\) has a sensitive attribute \(a_i \in \{0,1\}\) with population \(\mathrm{P}=\{0,1\}^*\). Furthermore, let \(\mathsf{P}(\mathrm{P})=\{U:U\subseteq \mathrm{P}\}\) and \(Sam(D,u)\from \mathsf{P}(\mathrm{P}) \times (0,1) \to \mathsf{P}(\mathrm{P}) \) be a randomized algorithm that i.i.d. samples rows or individuals from database \(D\) with probability \(s\) without replacement. Let \(San(D,p,q) \from \mathsf{P}(\mathrm{P}) \times (0,1) \times (0,1) \to \mathsf{P}(\mathrm{P}) \) be a two\-/coin randomized response algorithm that decides for any individual \(i^{\prime}\) in Database \(D\) with probability \(p\) if it should be part of the output. If it is not included in the output, the result of tossing a biased coin (coming up head with probability \(q\)) is added to the output. For ease of presentation and without loss of generality we restrict the individual's sensitive attribute to a boolean value \(i \in \{0,1\}\).

Before proving the desired properties, we need to introduce some notation. Let \(D = \{a_i\}\) be a database of sensitive attributes of individuals \(i \in \mathrm{P}\). For ease of presentation and without loss of generality we restrict the individual's sensitive attribute to a boolean value \(a_i \in \{0,1\}\) and \(\mathrm{D}={a_{i^{\prime}}}\) for all \(i^{\prime} \in \mathrm{P}\). Furthermore, let \(\mathsf{D}(\mathrm{D})=\{U:U\subseteq \mathrm{D}\}\) be the super-set of all possible databases and \(Sam(D,u)\from \- \mathsf{D}(\mathrm{D}) \times (0,1) \to \mathsf{D}(\mathrm{D}) \) be a randomized algorithm that i.i.d. samples rows or individuals with their sensitive attributes from database \(D\) with probability \(s\) without replacement. Let \(San(D,p,q) \from \mathsf{D}(\mathrm{D}) \times (0,1) \times (0,1) \to \mathsf{D}(\mathrm{D}) \) be a two\-/coin randomized response algorithm that decides for any individual \(i^{\prime}\) in database \(D\) with probability \(p\) if it should be part of the output. If it is not included in the output, the result of tossing a biased coin (coming up heads with probability \(q\)) is added to the output. 

%% Lemma for sampling decomposition and self-commutativity
\begin{lemma}
  \emph{(Decompose and commute sampling)}
  \label{lem:sampling-commutes}
  Let \(s = uv\) with \(s,u,v \in (0,1)\) being sampling probabilities for a sampling function \(Sam()\). It follows that \(Sam()\) can be composed and decomposed easily and is self\-/commutative.
  \begin{align*}
    Sam(D,s)&\approx Sam(Sam(D,u),v) \\
    &\approx Sam(Sam(D,v),u).
  \end{align*}
\end{lemma}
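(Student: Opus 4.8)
The plan is to show that i.i.d. sampling without replacement, viewed as a randomized map on databases, satisfies a "multiplicativity" property: sampling with probability $s = uv$ has the same output distribution as sampling with probability $u$ and then sampling the result again with probability $v$. The natural way to prove this is to compute, for an arbitrary fixed input database $D$ and an arbitrary fixed target subset $U \subseteq D$, the probability that $Sam(D,s)$ equals $U$, and then do the same for the composition $Sam(Sam(D,u),v)$, and check the two expressions agree. Self-commutativity then follows immediately because $uv = vu$, so both $Sam(Sam(D,u),v)$ and $Sam(Sam(D,v),u)$ are each $\approx Sam(D,uv)$; here $\approx$ should be read as equality of distributions (or statistical indistinguishability), and I would state that convention explicitly at the start.

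First I would fix notation: since each individual is included independently with the sampling probability, for $|D| = n$ and $|U| = k$ we have $\Pr[Sam(D,s) = U] = s^{k}(1-s)^{n-k}$ when $U \subseteq D$ and $0$ otherwise. Next, for the composition, I would condition on the intermediate database $W = Sam(D,u)$: since $Sam(D,u)$ always produces a subset of $D$, the only way to reach $U$ is through some $W$ with $U \subseteq W \subseteq D$. Writing $|W| = j$, we get
\begin{equation*}
\Pr[Sam(Sam(D,u),v) = U] = \sum_{W \,:\, U \subseteq W \subseteq D} \Pr[Sam(D,u) = W]\cdot \Pr[Sam(W,v) = U].
\end{equation*}
Plugging in the closed forms, $\Pr[Sam(D,u)=W] = u^{j}(1-u)^{n-j}$ and $\Pr[Sam(W,v)=U] = v^{k}(1-v)^{j-k}$, and grouping the subsets $W$ by their size $j$ (there are $\binom{n-k}{j-k}$ of them, since $W$ must contain the $k$ fixed elements of $U$ and $j-k$ of the remaining $n-k$ elements of $D$), the sum becomes
\begin{equation*}
u^{k} v^{k} \sum_{j=k}^{n} \binom{n-k}{j-k} (1-u)^{n-j} \bigl(u(1-v)\bigr)^{j-k}.
\end{equation*}
Reindexing by $m = j-k$ and applying the binomial theorem, the inner sum collapses to $\bigl(u(1-v) + (1-u)\bigr)^{n-k} = (1-uv)^{n-k}$, giving $\Pr[Sam(Sam(D,u),v)=U] = (uv)^{k}(1-uv)^{n-k} = \Pr[Sam(D,uv)=U]$, exactly as desired. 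Since the two distributions agree on every singleton event $\{U\}$ and both are supported on subsets of $D$, they are equal as distributions.

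The main obstacle is mostly a matter of care rather than depth: I need to be precise that sampling is performed independently per individual (the phrase "i.i.d. samples rows") so that the product form for $\Pr[Sam(D,s)=U]$ is justified, and that "without replacement" here just means each original row appears at most once in the output (so the output is genuinely a sub-set, not a multiset) — otherwise the conditioning-on-$W$ step and the binomial bookkeeping could be muddled. A secondary point worth a sentence is clarifying what $\approx$ means: if the paper intends statistical/computational indistinguishability rather than literal equality, I would note that exact equality of distributions is a fortiori sufficient, and that the self-commutativity claim is then immediate from commutativity of multiplication in $(0,1)$. No heavy machinery is needed; the proof is a short combinatorial identity once the probabilistic setup is pinned down.
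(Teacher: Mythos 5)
Your proof is correct, and it establishes the claim more completely than the paper does. The paper's own argument is a one-line per-row marginal computation: each row of \(D\) survives \(Sam_u\) with probability \(u\) and then \(Sam_v\) with probability \(v\), so it survives the composition with probability \(uv\); commutativity then follows from \(uv = vu\). That argument implicitly relies on the fact that rows are sampled independently, so that equality of the per-row inclusion probabilities already forces equality of the full distributions over subsets --- a step the paper does not spell out. Your route makes exactly this explicit: you write down the full subset distribution \(\Pr[Sam(D,s)=U]=s^{k}(1-s)^{n-k}\), condition on the intermediate database \(W\) with \(U\subseteq W\subseteq D\), and collapse the sum with the binomial theorem to recover \((uv)^{k}(1-uv)^{n-k}\). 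What the paper's version buys is brevity and the intuition that the whole lemma is just ``inclusion probabilities multiply''; what your version buys is a self-contained verification that the joint distributions (not merely the marginals) coincide, which is what the \(\approx\) in the lemma statement actually asserts. Your closing remarks about pinning down the meaning of \(\approx\) and of ``i.i.d.\ sampling without replacement'' address exactly the points the paper leaves informal, so there is no gap.
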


\begin{proof}
  Let \(Sam_u,Sam_v\) be sampling algorithms that sample rows i.i.d. from a database with probability \(u\) and \(v\) respectively. By applying \(Sam_u(D)\), any row in \(D\) has probability \(u\) of being sampled. The probability for any row in \(D\) being sampled by \(Sam_v\) is equivalently \(v\). Using function composition the probability for any row in \(D\) being sampled by \(Sam_s = (Sam_u \circ Sam_v)(D)\) is
  \begin{equation}\label{eq:sampling-decomposition}
    s=uv.
  \end{equation}
  %Thus the probability of a row being sampled by \((Sam_v \circ Sam_w)(D)\) is \(u = 1 - (1 - (v+w) + vw) = v+w-vw\).
  From multiplication being commutative (\(u\cdot v=v \cdot u\)) follows that \(Sam_u\) and \(Sam_v\) commute, that is \(Sam_u \circ Sam_v = Sam_v \circ Sam_u\). This is true for deterministic functions and can easily be extended to randomized functions described as random variables, as random variables are commutative under addition and multiplication. For ease of presentation and without loss of generality we keep the notion of functions instead of random variables.
  Let \(Sam_s(D)=Sam(D,s)\) be a sampling function that samples rows i.i.d. from a given database \(D\) with probability \(s\). Decomposing sampling function \(Sam_s()\) with probability \(s\) into two functions with probabilities \(u\) and \(v\) follow from \eqref{sampling-decomposition}. It also follows that two sampling functions with probabilities \(u,v\) can be composed into a single sampling function with sampling probability \(s\).
%  From Lemma~\ref{lem:post-sampling} together with \(Sam()\) being replacable with a composition of two sampling functions, as well two sampling functions being commute and \(Sam(D^{\prime},u)=Sam(Sam(D^{\prime},v),w)\) for any \(u,v,w \in (0,1), u=vw\) and \(Sam(Sam(D,v),w)\approx Sam(Sam(D,w),v)\), follows that
%  \begin{align*}
%  San(Sam(D,u),p,q) &\approx San(Sam(Sam(D,v),w),p,q) \\
%  &\approx Sam(San(Sam(D,v),p,q),w) \\
%  &\approx Sam(Sam(San(D,p,q),v),w) \\
%  &\approx Sam(San(D,p,q),u) \qed
%  \end{align*}
\end{proof}

%% Lemma for sampling and randomized response being commute
\begin{lemma}
  \emph{(Commutativity of sampling and randomized response)}
  \label{lem:sampling-randomizing-commutes}
  Given a sampling algorithm \(Sam()\) and a randomized response algorithm \(San()\),
  the result of the pre-sampling algorithm \(F_{pre}(D,s,p,q)=San(Sam(D,s),p,q)\) is statistically indistinguishable from the result of the post-sampling algorithm \(F_{post}(D,s,p,q)=Sam(San(D,p,q),s)\). It follows that sampling and randomized response commute under function composition: \(Sam \circ San = San \circ Sam\).
\end{lemma}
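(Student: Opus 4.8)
The plan is to exploit the fact that both $Sam()$ and $San()$ act independently and identically on each individual's row, and that the sampling coin for a row is independent of that row's (randomized) value. The cleanest route is a coupling argument: realize both compositions on a common probability space using a single collection of per\-/individual coins, and show that the two outputs are then literally the \emph{same} random variable, so that the statistical distance between them is exactly zero.

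Concretely, for each individual $i' \in \mathrm{P}$ I would introduce three mutually independent Bernoulli variables --- $B_{i'}\sim\mathrm{Bernoulli}(s)$ (the sampling coin), $C_{i'}\sim\mathrm{Bernoulli}(p)$ (the first randomized\-/response coin), and $E_{i'}\sim\mathrm{Bernoulli}(q)$ (the second randomized\-/response coin) --- all independent across individuals. Define the randomized value $r_{i'} := a_{i'}$ if $C_{i'}=1$ and $r_{i'} := E_{i'}$ otherwise; this is precisely the value $San()$ assigns to row $i'$, and $Sam()$ retains row $i'$ in its output iff $B_{i'}=1$. First I would evaluate $F_{pre}(D,s,p,q)=San(Sam(D,s),p,q)$ on this space: $Sam(D,s)$ keeps exactly the rows with $B_{i'}=1$, then $San()$ replaces the value of each kept row $i'$ by $r_{i'}$, giving $\{(i',r_{i'}) : B_{i'}=1\}$. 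Next I would evaluate $F_{post}(D,s,p,q)=Sam(San(D,p,q),s)$: $San()$ first maps row $i'$ to $r_{i'}$, then $Sam()$ keeps the rows with $B_{i'}=1$, giving the same set $\{(i',r_{i'}) : B_{i'}=1\}$. Hence $F_{pre}(D,s,p,q)=F_{post}(D,s,p,q)$ as random variables, so they are identically distributed; since statistical indistinguishability is bounded above by the statistical distance, this yields indistinguishability with parameter $0$, and the stated commutativity $Sam \circ San = San \circ Sam$ follows.

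The main obstacle is one of formalization rather than of mathematical content. Unlike the self\-/commutativity of sampling in Lemma~\ref{lem:sampling-commutes}, the two orderings here pass through intermediate databases of \emph{different} sizes ($Sam()$ shrinks and then $San()$ randomizes at reduced size, versus $San()$ randomizes at full size and then $Sam()$ shrinks), so one cannot conclude by a naive function\-/composition identity on a fixed object; the coupling above sidesteps this by reasoning at the level of the underlying coins. The point I would be careful to check is that $San()$'s action on a row depends only on that row together with its own fresh coins, and not on the identity or cardinality of the database it is applied to --- which is immediate from the row\-/wise definition of $San()$ --- so that the coins $C_{i'},E_{i'}$ attached to the rows surviving $Sam()$ in $F_{pre}$ have the same joint law as the corresponding coins used in $F_{post}$, while the coins attached to discarded rows ($B_{i'}=0$) never influence either output.
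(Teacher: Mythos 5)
Your proposal is correct and takes essentially the same route as the paper: the paper's proof is an informal case analysis that conditions on each individual's sampling coin and randomization coins and observes that the individual's fate is identical under both orderings, which is precisely the coupling you make explicit. Your version is just a more rigorous rendering of that argument (fixing all per\-/row coins on a common probability space and noting the two outputs coincide pointwise), so the underlying idea is the same.
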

\begin{proof}
  For any individual \(i\) having \(a_i \in D\) we have to consider eight different possible cases. In case the sampling algorithm \(Sam()\) decides to not sample \(i\), it obviously doesn't matter if it gets removed before the randomized response algorithm is run of afterwards. We thus condition on \(Sam()\) to include \(i\) in the output.
  \begin{enumerate}
    \item Let us first consider the case that \(San()\) outputs the real value for individual \(i\). As \(Sam()\) is fixed to output \(i\) independent of its value, there is no difference between \(F_{pre}\) and \(F_{post}\).
    \item In case \(San()\) outputs a randomized answer \(Sam()\) again is not influenced by the outcome of any of the coin tosses and passes \(i\) along to the output. This is of course also independent of the actual randomized result.
  \end{enumerate}
  This concludes the proof that sampling and randomized response are independent regarding their order of execution and thus commute.
\end{proof}

%% Corollaray for arbitrary placement of sampling functions around randomized response
\begin{corollary}
    \emph{(Arbitrary sampling around randomized response)}
  \label{cor:arbitrary-sampling}
  Let \(s = uv\) for \(s,u,v \in (0,1)\) be sampling probabilities for a sampling function \(Sam()\) and \(San()\) be a two-coin randomized response mechanism with probabilities \((p,q)\). Sampling can be arbitrarily traded between pre\-/sampling and post\-/sampling around the randomized response mechanism \(San\).
  \begin{align*}
    San(Sam(D,s),p,q)&\approx Sam(San(Sam(D,u),p,q),v) \\
    &\approx Sam(San(D,p,q),s).
  \end{align*}
\end{corollary}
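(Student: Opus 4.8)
The plan is to obtain \corref{arbitrary-sampling} purely by chaining \lemref{sampling-commutes} and \lemref{sampling-randomizing-commutes}, together with transitivity of statistical indistinguishability $\approx$; no new machinery is needed. There are two $\approx$ claims to verify, and I would derive each from a single application of a lemma.

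First I would establish the middle expression. Starting from $San(Sam(D,s),p,q)$ and using $s = uv$, \lemref{sampling-commutes} decomposes the single sampling step into two, so $Sam(D,s) \approx Sam(Sam(D,u),v)$ and hence $San(Sam(D,s),p,q) \approx San(Sam(Sam(D,u),v),p,q)$. Now I regard $Sam(D,u)$ as the input database and apply \lemref{sampling-randomizing-commutes} to the \emph{outer} sampling (with probability $v$) and the randomized response $San(\cdot,p,q)$: the pre-sampling form $San(Sam(\cdot,v),p,q)$ is indistinguishable from the post-sampling form $Sam(San(\cdot,p,q),v)$. Instantiated at the database $Sam(D,u)$ this gives $San(Sam(Sam(D,u),v),p,q) \approx Sam(San(Sam(D,u),p,q),v)$, which is exactly the first $\approx$ in the statement. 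The rightmost expression is even more immediate: one application of \lemref{sampling-randomizing-commutes} with the \emph{full} probability $s$ yields $San(Sam(D,s),p,q) \approx Sam(San(D,p,q),s)$, the second $\approx$. Transitivity of $\approx$ then ties all three expressions together.

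The one point requiring a little care — and the closest thing to an obstacle — is that \lemref{sampling-randomizing-commutes} is stated for a fixed database, yet in the first step I invoke it with the ``database'' being the random output $Sam(D,u)$ of an earlier sampling step. I would dispatch this by conditioning on the realized sub-database: the commutation argument underlying \lemref{sampling-randomizing-commutes} is row-wise and valid for every fixed database, hence it holds conditionally on each realization of $Sam(D,u)$ and therefore in distribution after averaging over that realization, which is precisely what $\approx$ asserts. (One should also note that the coin tosses of $San$ and the sampling choices of the outer $Sam$ are independent of each other and of $Sam(D,u)$, so the conditioning does not disturb either mechanism.) Apart from this bookkeeping the proof is a two-line composition of the two lemmas, so I do not anticipate any substantive difficulty.
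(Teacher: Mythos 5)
Your proposal is correct and takes essentially the same route as the paper, whose entire proof is the one-line remark that the corollary ``follows directly from applying Lemma~\ref{lem:sampling-commutes} and Lemma~\ref{lem:sampling-randomizing-commutes}''; you have simply made the chaining explicit and added the (sound) conditioning argument for applying \lemref{sampling-randomizing-commutes} to the random database $Sam(D,u)$. No gap.
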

\begin{proof}
  This follows directly from applying Lemma~\ref{lem:sampling-commutes} and Lemma~\ref{lem:sampling-randomizing-commutes}.
\end{proof}

We will now give a bound on \(\epsilon_{zk}\) for the privacy of our system under the zero\-/knowledge privacy setting, as well as derive a tighter bound for (\(\epsilon_{dp}\))\-/differential privacy, than the bound that generally follows from zero\-/knowledge privacy.

%% Theorem proving zero-knowledge privacy bound
\begin{theorem}
  \emph{(\(\epsilon_{zk}\)\-/zero\-/knowledge privacy)}
  \label{thrm:zkp}
  Let \(A\) be an algorithm that applies sampling with probability \(s\), together with a two-coin randomized response algorithm using probabilities \((p,q)\). \(A\) is \(\epsilon_{zk}\)\-/zero\-/knowledge private with
  \begin{equation}\label{eq:ezk}
    \epsilon_{zk} = ln\left(s\frac{2-s}{1-s}\left(\frac{p+\left(1-p\right)q}{\left(1-p\right)q}\right)+\left(1-s\right)\right).
  \end{equation}
\end{theorem}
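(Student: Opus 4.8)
The target is to produce, for every adversary $A$, an aggregate function $T$ (an i.i.d.\ random sub-sampling, the family used throughout this section) and a simulator $S$ such that $San_A(D)$ and $S(T(D_{-i}),|D|)$ satisfy the two-sided bound \eqref{zero-knowledge-privacy} with the stated $\epsilon_{zk}$. The first observation is that $A$ is exactly ``sub-sample at rate $s$, then two-coin randomized response $(p,q)$'', and that by \corref{arbitrary-sampling} (together with \lemref{sampling-commutes} and \lemref{sampling-randomizing-commutes}) the sub-sampling can be slid freely to either side of the randomized-response step. I would use this to split the argument into a \emph{differentially private core} and a \emph{sub-sampling shell}. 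For the core, \eqref{privacy-level2} already gives that the two-coin randomized-response mechanism $San(\cdot,p,q)$ is $\epsilon_{dp}$-differentially private with $e^{\epsilon_{dp}}=\gamma:=\tfrac{p+(1-p)q}{(1-p)q}$; more precisely, for a single individual the randomized answer on input $1$ and on input $0$ are within a factor $\gamma$ of each other, pointwise over the two possible responses and in both directions. That is all I would use from the randomization step.

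For the shell, rewrite $A(D)$ via \corref{arbitrary-sampling} as ``apply randomized response to $D$, then sub-sample the resulting responses at rate $s$''. Fix an individual $i$ and condition on whether $i$'s randomized response survives the sub-sampling: with probability $1-s$ it does not, in which case $A(D)$ has exactly the distribution of randomized response on a rate-$s$ sample of $D_{-i}$; with probability $s$ it does, in which case $A(D)$ is that same distribution with one extra randomized response (of $a_i$) appended. Now take $T$ to be i.i.d.\ sampling at rate $s$, so $T(D_{-i})$ is a rate-$s$ sample of $D_{-i}$, and let $S$ (i)~run randomized response on $T(D_{-i})$ and (ii)~independently, with probability $s$, append a \emph{surrogate} randomized response (the response of a fixed bit). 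The ``$i$ not sampled'' branches of $A(D)$ and of $S(T(D_{-i}))$ then coincide; on the ``$i$ sampled'' branch, $i$'s true randomized response and the surrogate differ by at most the factor $\gamma$ pointwise by the core step. Combining the two branches, normalising the ``$i$ sampled'' contribution against the probability $1-s$ of the coinciding branch produces the $\tfrac{1}{1-s}$, and carrying $i$'s influence through both $San_A(D)$ and the conditional distribution that $S$ must reconstruct contributes the factor $s(2-s)=1-(1-s)^2$; together this gives the multiplicative slack $(1-s)+\tfrac{s(2-s)}{1-s}\gamma=s\tfrac{2-s}{1-s}\gamma+(1-s)$ in the ``$\le$'' direction of \eqref{zero-knowledge-privacy}, and the symmetric argument (swapping $San_A(D)$ and $S$ and using the two-sided pointwise form of the core bound) gives ``$\ge$''. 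Hence $A$ is $\epsilon_{zk}$-zero-knowledge private with $\epsilon_{zk}=\ln\!\big(s\tfrac{2-s}{1-s}\gamma+(1-s)\big)$, i.e.\ \eqref{ezk}; since any post-processing of a zero-knowledge private output is again zero-knowledge private, this extends to the full system of \secref{design}.

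\textbf{Main obstacle.} The delicate part is the shell step: choosing $T$ and $S$ and bounding the discrepancy so the constant is \emph{exactly} $s\tfrac{2-s}{1-s}$. Two complications make this non-routine. First, the ``extra appended response'' makes $San_A(D)$ and $S(T(D_{-i}))$ supported on outputs of different sizes on the ``$i$ sampled'' branch, so one must pass to a representation on which appending acts as a bounded-support shift (e.g.\ the pair of response counts, where appending moves $(n_0,n_1)$ to $(n_0{+}1,n_1)$ or $(n_0,n_1{+}1)$) before the $\gamma$-factor of the core step can be invoked. Second, one must check that the \emph{same} pair $(T,S)$ witnesses both directions of \eqref{zero-knowledge-privacy}, which forces use of the two-sided pointwise randomized-response guarantee rather than just its one-sided $\epsilon_{dp}$ statement from \eqref{privacy-level2}, and care that the normalisation against the probability $1-s$ (valid only for $s<1$) is carried through consistently on both sides.
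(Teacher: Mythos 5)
Your overall strategy---commute the sampling to one side via Corollary~\ref{cor:arbitrary-sampling}, isolate the randomized-response core as an $\epsilon_{rr}$-differentially-private mechanism with $e^{\epsilon_{rr}}=\gamma$, and then account for the sampling shell with an explicit pair $(T,S)$---is a plausible reconstruction of what must happen ``under the hood,'' but it is not the paper's argument, and it has a genuine gap exactly at the step you yourself flag as the main obstacle. The paper never constructs a simulator: it invokes Proposition~1 of the crowd-blending privacy paper (every $\epsilon$-differentially-private mechanism is $(k,\epsilon)$-crowd-blending private, so randomized response is $(1,\epsilon_{rr})$-crowd-blending private) and then Theorem~1 of that paper, which states directly that pre-sampling with probability $s$ composed with a $(k,\epsilon_{rr})$-crowd-blending private mechanism is zero-knowledge private with $\epsilon_{zk}=\ln\bigl(s\frac{2-s}{1-s}e^{\epsilon_{rr}}+(1-s)\bigr)$. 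The formula \eqref{ezk} is obtained by substituting $e^{\epsilon_{rr}}=\frac{p+(1-p)q}{(1-p)q}$, and Corollary~\ref{cor:arbitrary-sampling} is used only afterwards, to trade pre- for post-sampling while keeping $\epsilon_{zk}$ fixed.

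The gap in your version is that the constant $s\frac{2-s}{1-s}$ is asserted rather than derived. Writing $San_A(D)=(1-s)P_0+sP_1$ and $S=(1-s)P_0+sP_0'$ does not permit a branch-by-branch comparison, because the output does not reveal whether $i$ was sampled: once you pass to response counts, $P_1$ and $P_0'$ overlap in support with $P_0$, so you must bound the likelihood ratio of the two \emph{mixtures}, and that is precisely where the binomial concentration analysis of the crowd-blending proof enters. The phrases ``normalising the `$i$ sampled' contribution against the probability $1-s$ produces the $\frac{1}{1-s}$'' and ``carrying $i$'s influence through both sides contributes the factor $s(2-s)$'' name the answer without doing that computation. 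Relatedly, the guarantee obtained this way is $(\epsilon_{zk},\delta)$-zero-knowledge privacy with a nonzero additive slack $\delta$ --- the paper explicitly says it omits $\delta$, which ``can be derived equivalently'' from the cited Theorem~1 --- so your claimed two-sided, purely multiplicative bound is stronger than what is actually established, and you would need to show where the additive term is absorbed. If you want a self-contained proof rather than the paper's citation, the honest route is to reproduce the crowd-blending sampling lemma for $k=1$, additive error included.
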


The system design is described in Section~\ref{sec:design}.
\begin{proof}
  From~\cite{crowd-blending}, Theorem 1 follows that a \((k,\epsilon_{rr})\)\-/crowd\-/blending private mechanism combined with a pre\-/sampling using probability \(s\) achieves \(\epsilon\)\-/zero\-/knowledge privacy with \[\epsilon_{zk}=ln\left(s\cdot\left(\frac{2-s}{1-s}e^{\epsilon_{rr}}\right)+(1-s)\right).\] We omit the description for the additive error \(\delta\), which can be derived equivalently from~\cite{crowd-blending} Theorem 1. Following Proposition 1 from~\cite{crowd-blending} every \(\epsilon_{rr}\)-differentially private mechanism is also \(k,\epsilon_{rr}\)\-/crowd\-/blending private, thus randomized response being an \(\epsilon_{rr}\)-differentially private mechanism, also satisfies \((k,\epsilon_{rr})\)\-/crowd\-/blending privacy with \(k=1\). Combining both results with \eqref{privacy-level2} \(\epsilon_{rr}=ln\left(\frac{p+(1-p)q}{(1-p)q}\right)\) gives an \[\epsilon_{zk}=ln\left( s\cdot\left(\frac{2-s}{1-s} \left( \frac{p+(1-p)q}{(1-p)q} \right)  \right)+(1-s) \right)\]
zero\-/knowledge private mechanism for randomized response combined with pre\-/sampling. Using Corollary~\ref{cor:arbitrary-sampling} we can replace pre\-/sampling with a combination of pre- and post\-/sampling (with probabilities \(u,v\) respectively and \(s=u \cdot v\)) while keeping \(\epsilon_{zk}\) fixed. We thus have
\[\epsilon_{zk} = ln\left(uv\frac{2-uv}{1-uv}\left(\frac{p+\left(1-p\right)q}{\left(1-p\right)q}\right)+\left(1-uv\right)\right).\]
\end{proof}

If we do not aim at achieving zero\-/knowledge privacy, we can fall back to differential privacy using the result from~\cite{zero-knowledge-privacy}, Proposition 3, which states that any \(\epsilon\)\-/zero\-/knowledge private algorithm is also \(2\epsilon\)-differentially private. Using the results from sampling secrecy~\cite{sampling-privacy2}, which achieve a privacy boost by applying pre\-/sampling before using a differentially private algorithm, we derive a tighter bound for differential privacy, than what follows generally from zero\-/knowledge privacy.

%% Theorem that proves differential privacy bound
\begin{theorem}
  \emph{(\(\epsilon_{dp}\)\-/differential privacy)}
  \label{thrm:dp}
  Let \(A\) be an algorithm that applies sampling with probability \(s\), followed by a two-coin randomized response algorithm using probabilities \((p,q)\). \(A\) is \(\epsilon_{dp}\)\-/differentially private with
  \begin{equation}\label{eq:edp}
    \epsilon_{dp} = ln\left(1+s\left(\frac{p+\left(1-p\right)q}{\left(1-p\right)q}-1\right)\right).
  \end{equation}
\end{theorem}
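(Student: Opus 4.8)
The plan is to derive the bound by composing two facts that are already in hand, rather than redoing the privacy accounting from scratch. The first is that the two-coin randomized response mechanism $San(\cdot,p,q)$, considered on its own, is $\epsilon_{rr}$-differentially private with $e^{\epsilon_{rr}}=\tfrac{p+(1-p)q}{(1-p)q}$; this is exactly \eqref{privacy-level2} (via \cite{DBLP:journals/fttcs/DworkR14}). The second is privacy amplification by subsampling --- the ``sampling secrecy'' effect of \cite{sampling-privacy2}: running an $\epsilon$-differentially private mechanism after an i.i.d.\ pre-sampling step that retains each individual with probability $s$ yields a mechanism that is $\epsilon'$-differentially private with $e^{\epsilon'}=1+s(e^{\epsilon}-1)$. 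Since the algorithm $A$ of the theorem is, by construction, ``pre-sample with probability $s$, then apply $San(\cdot,p,q)$'', substituting $e^{\epsilon_{rr}}$ for $e^{\epsilon}$ gives $e^{\epsilon_{dp}}=1+s\big(\tfrac{p+(1-p)q}{(1-p)q}-1\big)$, which is \eqref{edp}. One could instead chain Theorem~\ref{thrm:zkp} with the generic implication ``$\epsilon_{zk}$-zero-knowledge privacy implies $2\epsilon_{zk}$-differential privacy'' of \cite{zero-knowledge-privacy}, but this is loose; obtaining the sharper subsampling constant is the whole point, and the comparison is made precise later in Proposition~\ref{prop:tighter-dp}.

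To keep the section self-contained I would include the short standard proof of the amplification step, specialized to our setting. Fix neighboring inputs $D$ and $D_{-i}$. Conditioning on whether $Sam(\cdot,s)$ retains individual $i$, the law of $Sam(D,s)$ is the mixture $(1-s)\,\nu_0+s\,\nu_1$, where $\nu_0$ is the law of $Sam(D_{-i},s)$ (equivalently, of $Sam(D,s)$ given that $i$ is dropped) and $\nu_1$ is the law of $\{i\}\cup Sam(D_{-i},s)$ (given that $i$ is kept). For an arbitrary output set $O$ put $a=\Pr_{x\sim\nu_0}[San(x,p,q)\in O]$ and $b=\Pr_{x\sim\nu_1}[San(x,p,q)\in O]$. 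Coupling $\nu_0$ and $\nu_1$ by the map $x\mapsto x\cup\{i\}$, applying the (two-sided) $\epsilon_{rr}$-differential privacy of $San$ for each fixed $x$, and taking the expectation over the common sample yields $e^{-\epsilon_{rr}}a\le b\le e^{\epsilon_{rr}}a$. Hence
\[
\Pr[A(D)\in O]=(1-s)a+sb\le a\big(1+s(e^{\epsilon_{rr}}-1)\big)=e^{\epsilon_{dp}}\,\Pr[A(D_{-i})\in O],
\]
which is \eqref{differential-privacy}; the reverse inequality follows from $b\ge e^{-\epsilon_{rr}}a$ and the elementary identity $(1+s(e^{\epsilon_{rr}}-1))(1-s+se^{-\epsilon_{rr}})=1+s(1-s)(e^{\epsilon_{rr}/2}-e^{-\epsilon_{rr}/2})^2\ge1$. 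I would also note that, by Corollary~\ref{cor:arbitrary-sampling}, whether sampling is applied before or after (or split around) the randomized response is immaterial, so the same $\epsilon_{dp}$ holds for the system as deployed; pre-sampling is just the form to which the amplification lemma applies verbatim.

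The delicate point --- which I would state carefully before invoking \cite{sampling-privacy2} --- is the interface between the two facts: $San$ must be treated as a database-to-database randomizer under the add/remove notion of neighbors used throughout this section, so that deleting individual $i$ removes exactly $i$'s randomized bit while leaving every other individual's output identically distributed, and the amplification argument must be run against this per-individual randomizer rather than against some aggregate derived from it. Because each individual's randomized response is independent of all others', this reduction is immediate once phrased precisely, and the ``pointwise $\epsilon_{rr}$-DP, then take expectations'' step above is exactly what makes it rigorous; everything after that is routine algebra.
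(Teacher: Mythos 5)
Your proposal is correct and follows essentially the same route as the paper: invoke the $\epsilon_{rr}$-differential privacy of two-coin randomized response from \eqref{privacy-level2}, apply the amplification-by-subsampling bound $e^{\epsilon_{dp}}=1+s(e^{\epsilon_{rr}}-1)$ (which the paper simply cites from the sampling-secrecy literature rather than reproving), and use Corollary~\ref{cor:arbitrary-sampling} to transfer the bound to arbitrary pre/post splits of the sampling. Your inlined mixture-and-coupling proof of the amplification step, including the reverse direction via the identity $(1+s(e^{\epsilon_{rr}}-1))(1-s+se^{-\epsilon_{rr}})\ge 1$, is a sound and more self-contained rendering of the same argument.
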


\begin{proof}
  We use the result from~\cite{sampling-privacy3}, Proof of Lemma 3, which bounds an \(\epsilon_{rr}\)\-/differential private algorithm combined with pre\-/sampling using probability \(s\) by \(\epsilon_{dp} = ln(1+s(exp(\epsilon_{rr})-1))\). Let \(\epsilon_{rr}=ln\left(\frac{p+\left(1-p\right)q}{\left(1-p\right)q}\right)\) be the bound derived for randomized response, we get
  \[
    \epsilon_{dp}=ln\left(1+s\left(\frac{p+\left(1-p\right)q}{\left(1-p\right)q}-1\right)\right).
  \]
  Applying Corollary~\ref{cor:arbitrary-sampling} we derive an \(\epsilon_{dp}\) bound for the combination of pre\-/sampling, randomized response and post\-/sampling of:
  \[
    \epsilon_{dp}=ln\left(1+(uv)\left(\frac{p+\left(1-p\right)q}{\left(1-p\right)q}-1\right)\right).
  \]
\end{proof}

%% Proposition for d.p. result being tighter than derived d.p. result from z.k.p.
\begin{proposition}
  \emph{(Tighter \(\epsilon_{dp}\)-differential privacy bound)}
  \label{prop:tighter-dp}
  The bound \(\epsilon_{dp}\) for differential privacy of a sampled randomized response system derived in Theorem~\ref{thrm:dp} is tighter than \(\epsilon_{zk}\)-differential privacy, which is again tighter than the general \(2\epsilon_{zk}\)-differential privacy bound that follows from \(\epsilon_{zk}\)\-/zero\-/knowledge privacy~\cite{zero-knowledge-privacy}.%., Proposition 3. 
\end{proposition}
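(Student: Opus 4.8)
The plan is to strip off both logarithms and reduce the proposition to two elementary inequalities between their arguments, using monotonicity of \(\ln\) and positivity of every quantity on the stated domains. Abbreviate the per-client randomized-response factor by \(E := \frac{p+(1-p)q}{(1-p)q}\); since \(p,q \in (0,1)\) we have \(E \geq 1\) (indeed \(E > 1\) whenever \(p>0\)). With this notation Theorem~\ref{thrm:dp} reads \(\epsilon_{dp} = \ln\!\big(1 - s + sE\big)\) and Theorem~\ref{thrm:zkp} reads \(\epsilon_{zk} = \ln\!\big(s\tfrac{2-s}{1-s}E + (1-s)\big)\), with \(s \in (0,1)\) so that \(1-s > 0\). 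The claimed ordering is \(\epsilon_{dp} \leq \epsilon_{zk} \leq 2\epsilon_{zk}\), where each quantity is read as a differential-privacy parameter (smaller meaning tighter).

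For the first inequality \(\epsilon_{dp} \leq \epsilon_{zk}\) I would argue as follows: by monotonicity of \(\ln\) it suffices to show \(1 - s + sE \leq s\tfrac{2-s}{1-s}E + (1-s)\); cancelling the common summand \(1-s\) reduces this to \(sE \leq s\tfrac{2-s}{1-s}E\), and dividing through by \(sE > 0\) leaves \(1 \leq \tfrac{2-s}{1-s}\), which holds (in fact strictly, since \(\tfrac{2-s}{1-s} = 1 + \tfrac{1}{1-s} > 2\) for \(s \in (0,1)\)). Replacing \(s\) by a product \(uv \in (0,1)\) throughout gives the same conclusion for the pre-/post-sampling variant stated at the end of the two theorems, which is the form justified by Corollary~\ref{cor:arbitrary-sampling}.

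For the second inequality \(\epsilon_{zk} \leq 2\epsilon_{zk}\) I would simply show \(\epsilon_{zk} \geq 0\), i.e. that the argument of the logarithm in \eqref{ezk} is at least \(1\): using \(E \geq 1\), \(s\tfrac{2-s}{1-s}E + (1-s) \geq s\tfrac{2-s}{1-s} + (1-s) = \tfrac{s(2-s)+(1-s)^2}{1-s} = \tfrac{1}{1-s} > 1\) for \(s \in (0,1)\); hence \(\epsilon_{zk} > 0\) and \(2\epsilon_{zk} > \epsilon_{zk}\). The intermediate statement in the proposition — that \(\epsilon_{zk}\)-differential privacy already holds and is no weaker than the \(2\epsilon_{zk}\)-differential-privacy bound derived from Proposition~3 of~\cite{zero-knowledge-privacy} — is then immediate, combining \(\epsilon_{zk}\geq 0\) with the fact that zero-knowledge privacy is at least as strong as differential privacy at the same parameter. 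Chaining the two inequalities yields \(\epsilon_{dp} \leq \epsilon_{zk} \leq 2\epsilon_{zk}\), which is the claimed ordering.

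There is no genuinely hard step here; the only thing requiring care is bookkeeping of the domain assumptions. It is the openness of \((0,1)\) for \(s\) that guarantees \(1-s>0\) (so the fractions are well defined and the divisions preserve the inequality direction), and \(p,q \in (0,1)\) that guarantees \(E \geq 1\) (needed for \(\epsilon_{zk} \geq 0\)). If desired, one could additionally record the strictness of both inequalities, which the computation above already supplies, and remark that the boundary cases \(s \to 1\) and \(p \to 0\) are excluded by assumption and hence irrelevant.
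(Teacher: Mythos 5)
Your proof is correct and follows essentially the same route as the paper's: both strip the logarithms by monotonicity, cancel the common \((1-s)\) summand, divide by the positive randomized-response factor, and reduce the comparison to \(1 < \tfrac{2-s}{1-s}\) for \(s \in (0,1)\). The only addition is your explicit verification that \(\epsilon_{zk} > 0\) (so that \(\epsilon_{zk} < 2\epsilon_{zk}\)), a step the paper leaves implicit; note only that the intermediate \(\epsilon_{zk}\)-differential-privacy claim is justified by the chain \(\epsilon_{dp} < \epsilon_{zk}\) together with Theorem~\ref{thrm:dp}, not by a general ``zero-knowledge implies differential privacy at the same parameter'' implication (the cited general result carries a factor of \(2\)).
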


We directly proof Proposition~\ref{prop:tighter-dp} by comparing \(\epsilon_{dp}\) from Theorem~\ref{thrm:dp} with \(\epsilon_{zk}\) from Theorem~\ref{thrm:zkp}. As we want to prove a bound that is tighter than \(\epsilon\), we drop the factor of \(2\). This is possible because a \(\epsilon\)-differentially private algorithm is also \(2\epsilon\)-differentially private. If we succeed in proving a bound tighter than \(\epsilon\), then \(2\epsilon\)-differential privacy is trivially fulfilled.

\begin{proof}
  Proposition~3 from~\cite{zero-knowledge-privacy} states that every \(\epsilon\)\-/zero\-/knowledge private algorithm is also \(2\epsilon\)-differentially private. Using Theorem~\ref{thrm:zkp} we get a \(\epsilon_{zk}\)-differentially private system with \(2\epsilon_{zk}=2 ln\left(s\frac{2-s}{1-s}\left(\frac{p+\left(1-p\right)q}{\left(1-p\right)q}\right)+\left(1-s\right)\right)\). Theorem~\ref{thrm:dp} proves a bound of \(\epsilon_{dp}=ln\left(1+s\left(\frac{p+\left(1-p\right)q}{\left(1-p\right)q}-1\right)\right)\). Let \(e^{\epsilon_{rr}}=\frac{p+\left(1-p\right)q}{\left(1-p\right)q}\). Putting together Theorem~\ref{thrm:dp}, Theorem~\ref{thrm:zkp}, Proposition~\ref{prop:tighter-dp} and Proposition~3~\cite{zero-knowledge-privacy} we have:
  {\scriptsize
  \begin{align*}
    ln\left(1+s\left(\frac{p+\left(1-p\right)q}{\left(1-p\right)q}-1\right)\right) &< ln\left(s\frac{2-s}{1-s}\left(\frac{p+\left(1-p\right)q}{\left(1-p\right)q}\right)+\left(1-s\right)\right) \\
%    1+s\left(\frac{p+\left(1-p\right)q}{\left(1-p\right)q}-1\right) &< s\frac{2-s}{1-s}\left(\frac{p+\left(1-p\right)q}{\left(1-p\right)q}\right)+\left(1-s\right) \\
%    s\left(\frac{p+\left(1-p\right)q}{\left(1-p\right)q}-1\right) &< \frac{2-s}{1-s} s \left(\frac{p+\left(1-p\right)q}{\left(1-p\right)q}\right)-s \\
    s\left(\frac{p+\left(1-p\right)q}{\left(1-p\right)q}-1\right) &< \frac{2-s}{1-s} s \left(\frac{p+\left(1-p\right)q}{\left(1-p\right)q}-1\right)\\
%  \end{align*}
%  \begin{align*}
    s\left(e^{\epsilon_{rr}}-1\right) &< \frac{2-s}{1-s} s \left(e^{\epsilon_{rr}}-1\right) \\
    1 &< \frac{2-s}{1-s}
  \end{align*}
  }%
  As \(s \in (0,1)\) is the sampling parameter with a minimal right side for \(s_{min}=\argmin_{s\in (0,1)}\left(\frac{2-s}{1-s}\right)=0\) the above inequality becomes \(1<2\), which holds and concludes the proof.
\end{proof}

\begin{figure}[t]
  \centering
  \includegraphics[scale=1.55]{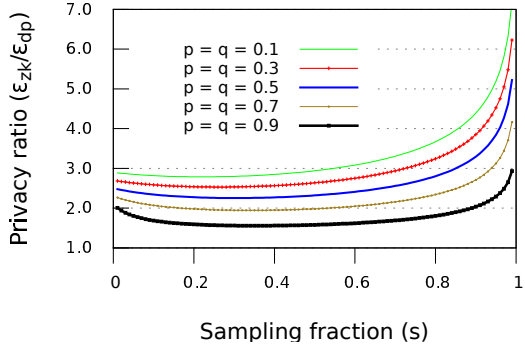} 
  %\vspace{-4mm}
  \caption{Ratio of \(\frac{\epsilon_{zk}}{\epsilon_{dp}}\) depending on the sampling parameter \(s\) for different values \(p\) and \(q\).}
  \label{fig:epsilonZKbyDP}
\end{figure}

\myparagraph{Relation of differential privacy and zero\-/knowledge privacy}
Zero\-/knowledge privacy and differential privacy describe the advantage \(\epsilon\) of an adversary in learning information about individual \(i\) by using an output from an algorithm \(San()\) running over database \(D\) containing sensitive information \(a_i \in D\) of individual $i$ compared to using a result of a second --- possibly different --- algorithm \(San^{\prime}()\) running over \(D_{-i}\). Zero\-/knowledge privacy is a strictly stronger privacy metric through the additional access to aggregate information of the remaining database \(D_{-i}\) compared to differential privacy~\cite{zero-knowledge-privacy}. By intuition, as differential privacy is a special case of zero\-/knowledge privacy and the adversary aims at maximizing its advantage, the advantage of an adversary in the zero\-/knowledge model is at least as high and possibly higher than the advantage of an adversary in the differential privacy model: \(\epsilon_{zk} \geq \epsilon_{dp}\). Figure~\ref{fig:epsilonZKbyDP} draws the ratio \(\frac{\epsilon_{zk}}{\epsilon_{dp}}\) between the zero\-/knowledge privacy level \(\epsilon_{zk}\) and the differential privacy level \(\epsilon_{dp}\) given identical parameters \(p,q\) and \(s\). Put differently, as the adversary is allowed to do more in the zero\-/knowledge model, the privacy level is lower, which is reflected by a higher \(\epsilon_{zk}\) value compared to the differential privacy level \(\epsilon_{dp}\) --- given identical system parameters.

\myparagraph{{Property \# II:} Anonymity} We make the following assumptions to achieve the remaining two privacy properties:%: {\em (ii)}~anonymity, and {\em (iii)}~unlinkability. 

%To support this claim we reference own and previous relevant results for each property. Our system makes the following privacy assumptions. %They are directly used to show fulfillment of the privacy properties.

\label{sec:assumptions}
\begin{itemize}

\item[{\bf (A1)}] At least two out of the $n$ proxies are not colluding.

\item[{\bf (A2)}] The aggregator does not collude with any of the proxies.

\item[{\bf (A3)}] The aggregator and analysts cannot --- at the same time --- observe the %incoming and outgoing
communication around the proxies.

\item[{\bf (A4)}] The adversary, seen as an algorithm, lies within the polynomial time complexity class.%, which is a standard cryptographic assumption.
\end{itemize}

%\myparagraph{Anonymity}
To provide anonymity, we require that no system component (proxy, aggregator, analyst) can relate a query request or answer to any of the clients. To show the fulfillment of that requirement we take the view of all three parties.

a) A {\em proxy} can of course link the received data stream to a client, as it is directly connected. However, as the data stream is encrypted, it would need to have the plaintext query request or response for the received data stream. To get the plaintext the proxy would either need to break symmetric cryptography, which breaks assumption {\bf (A4)}, collude with {\em all} other proxies for decryption, which breaks assumption {\bf (A1)} or collude with the aggregator to learn the plaintext, which breaks assumption {\bf (A2)}.

b) Anonymity against the {\em aggregator} is achieved by source-re\-writing, which is a standard anonymization technique typically used by proxies and also builds the basis for anonymization schemes~\cite{onion-routing,tor}. To break anonymity the aggregator must be a global, passive attacker, which means that he is able to simultaneously listen to incoming and outgoing traffic of any proxy. This would violate assumption {\bf (A3)}. The other possibility to bridge the proxies is by colluding with any of them --- breaking assumption {\bf (A2)}.

c) The {\em analyst} knows the query request, but doesn't get to learn the single query answers. He needs to collude with the aggregator, to see single responses. Thus the problem reduces to breaking anonymity from the view of the aggregator. Collusion with the aggregator and any proxy would break assumption {\bf (A2)}. Collusion with up to \(n-1\) proxies reduces to breaking anonymity from the proxy view.

%To invalidate the privacy property of the client against the analyst we first concentrate on the query request, as this is known to the analyst. He needs to link requests to encrypted data streams and the data streams to clients. Collusion with the aggregator is allowed and results in the view of the aggregator, for which clients are still anonymous. Collusion with up to \((n-1)\) proxies is allowed, but does not enable the analyst to find the data stream for the selected query request, as the encryption of the data stream cannot be circumvented. Colluding with any proxy and the aggregator would break assumption {\bf (A2)}. Breaking the anonymity of query answers is even harder for the analyst, as he only gets aggregate results from the aggregator. It is therefore as least as hard as breaking the anonymity of query requests.

%\subsection{Unlinkability}
\myparagraph{Property \# III:~Unlinkability}
Unlinkability is provided by the source-rewriting scheme as in anonymity. Breaking unlinkability on any {\em proxy} is similar to breaking anonymity, as the proxy would need to get the plaintext query. The {\em aggregator} only gets query results, but no source information, as this is hidden by the anonymization scheme. The query results sent by the clients also do not contain linkable information, just identically structured answers without quasi-identifiers. The view of the {\em analyst} doesn't receive responses, so it must collude with either a proxy or the aggregator, effectively reducing to the same problem as described above.

% to the query requests. As all clients that subscribed to that query request are sending identically structured information without any quasi-identifier, linkage upon that information is assumed to be infeasible. The view of the {\em analyst} doesn't receive any query responses, so it must collude with the aggregator and then solve the same linkability problem the aggregator would need to solve.

\myparagraph{Acknowledgements} We would like to thank Amazon for providing us an Amazon Web Services (AWS) Education Grant.

\balance
\referencefontsize
\bibliographystyle{abbrv}
\bibliography{main}

\end{document}